\newcommand{\gsurf}{G-SURF}
\title{A systematic association of subgraph counts over a network}
\let\titletext\@title
\author[1]{Dimitris~Floros}
\author[1,2]{Nikos~Pitsianis}
\author[2]{Xiaobai~Sun}
\affil[1]{Department~of~Electrical~and~Computer~Engineering,
  Aristotle~University~of~Thessaloniki, Thessaloniki~54124,~Greece}
\affil[2]{Department~of~Computer~Science, Duke~University,
  Durham,~NC~27708,~USA}
\newcommand{\pdfauthors}{%
  Floros D., Pitsianis N., Sun X.
}
\def\th@plain{%
  \thm@notefont{}%
  \itshape %
}
\def\th@definition{%
  \thm@notefont{}%
  \normalfont %
}
\newtheorem{theorem}{Theorem}
\newtheorem{corollary}[theorem]{Corollary}
\newtheorem{proposition}{Proposition}
\newtheorem{definition}{Definition}
\newtheorem{example}{Example}
\Crefname{ALC@unique}{Line}{Lines} %
\let\leftorig\left
\let\rightorig\right
\renewcommand{\left}{\mathopen{}\mathclose\bgroup\leftorig}
\renewcommand{\right}{\aftergroup\egroup\rightorig}
\DeclareSIUnit{\nothing}{\relax}
\begin{document}

\doublespacing  %
\pdfbookmark[1]{Title}{sec:title}
\maketitle

\addcontentsline{toc}{section}{Abstract}
\begin{abstract}

We associate all small subgraph counting problems with a systematic
graph
encoding/rep{-}resentation
system which makes a coherent use of
graphlet structures. The system can serve as a unified foundation for
studying and connecting many important graph problems in theory and
practice.  We describe topological relations among graphlets (graph elements)
in rigorous mathematics language and from the perspective of graph
encoding. We uncover, characterize and utilize algebraic and numerical
relations in graphlet counts/frequencies. We present a novel algorithm
for efficiently counting small subgraphs as a practical product of
our theoretical findings.

 \end{abstract}

\clearpage

\section{Introduction}
\label{sec:introduction}

Graph or network studies, classical or modern, inevitably examine
subgraph structures and counts, especially small subgraphs.
Detecting the presence or absence of a small subgraph $H$ in
a larger graph $G$ under consideration is fundamental to several
classical graph-theoretic problems such as graph recognition and graph
classification~\cite{olariu1988,faudree1997,kloks2000,corneil1985}.  In
the early 1930s, Whitney investigated graph
connectivity with several small subgraph patterns~\cite{whitney1932},
which in modern terms are triangles, claws, diamonds, and four-node cliques.
Since then, if not earlier, the counts or distributions of small
subgraphs with prescribed patterns have been
persistently used as primitives to characterize, recognize and
categorize graphs. We give in \Cref{sec:conclusion} a brief review of
subgraph counting problems.

At the turn of the 21\textsuperscript{st} century, subgraph counts and
distributions gained unprecedented attention and applications with the
advent of real-world networks and the advance in network
analysis techniques. Two seminal papers, among others, made a massive
impact on applied network studies by using subgraph structures and
counts. In 1998, Watts and Strogatz used triangles in their network
centrality measure and network model~\cite{watts1998c}. In 2002, Milo
and his five co-authors found and defined network motifs as simple
network building blocks~\cite{milo2002}. The seminal works inspired
many new approaches in improtant applications such as
in biochemistry for investigating gene interactions,
in neurobiology for mapping neural pathways in
the brain, in computer vision and graphics for image alignment.
Frequently occurring subgraphs are used to analyze protein-protein
interaction networks and metabolic networks for drug target
discovery~\cite{csermely2013}.

In fact, 2004 saw another important work by Pr\v{z}ulj, Corneil and
Jurisica~\cite{przulj2004}.  The authors introduced the concept and
use of graphlets for network analysis, which extend the conventional
approach with edges and triangles to small subgraphs of various
topological structures and their statistical distributions. The work
has gained more attention and appreciation over the years, mostly in
the community of researchers investigating biological networks with
statistical methods~\cite{ribeiro2020}.

\begin{figure}[!t]
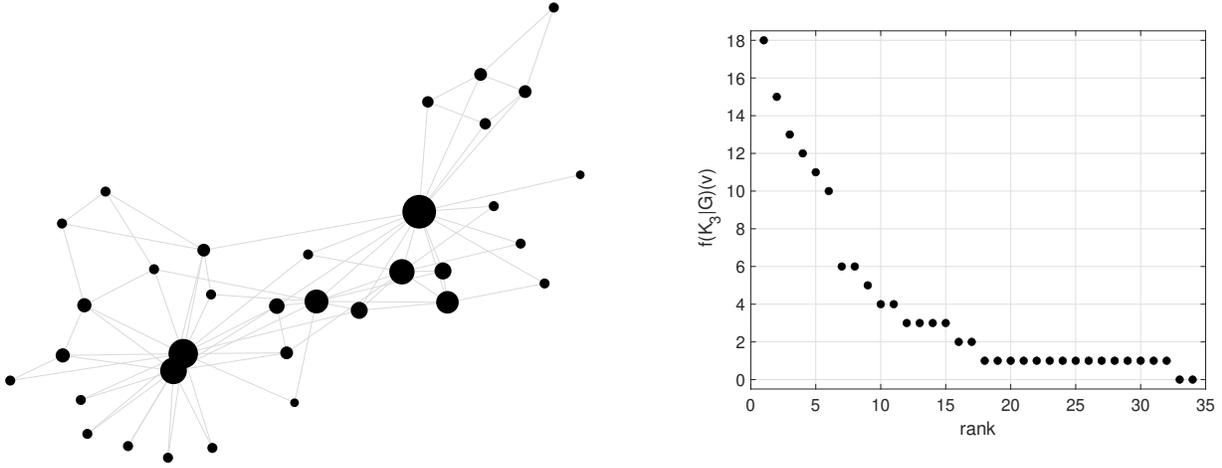

  \centering
  \begin{subfigure}{.48\linewidth}
    \centering
    \includegraphics[width=.9\linewidth]{%
      images/introduction/zachary_vertexmap_triangle}
  \end{subfigure}
  \begin{subfigure}{.48\linewidth}
    \centering
    \includegraphics[width=.8\linewidth]{%
      images/introduction/zachary_sequence_triangle}
  \end{subfigure}
  \caption{Triangle-frequency map (left) on Zachary's karate club
    friendship network~\cite{zachary1977b} and the triangle-frequency
    sequence (right).  The network has \num{34} club members and
    \num{78} friendship links.  The counts in the
    (spatial/topological) map have 1-1 association with club members
    (vertex labels), members in more triangle links are shown with
    larger markers. The counts in the (statistical) sequence are
    sorted in non-ascending order. }
  \label{fig:frequency-map-sequence}
\end{figure}

Applied network analysis and graph data mining with motifs or graphlet
distributions remain ad hoc and in flux, by and large, with
undiminished interest and enthusiasm yet lack of coherent
understanding and principled decision making at multiple data analysis
stages.  This situation is reflected in multiple surveys and
reviews~\cite{washio2003,lee2010,jiang2013,alhasan2018,ribeiro2020,bouhenni2021}.
Rarely graphlet frequencies are connected to motif detection or discovery.

In the present work, we make a systematic association of all small
subgraph counting problems with a graph encoding system which makes a
coherent use of graphlet structures. The graph encoding/representation
system can serve as a unified foundation for studying and connecting
many significant graph problems in theory as well as in practice.  In
\Cref{sec:formal-description}, we first give a formal description of
multi-channel graph encoding with template graphs, in rigorous
mathematics language and from the graph encoding perspective. We then
focus on a system of graph encoding elements using what are known as
graphlets.  In \Cref{sec:feature-conversion}, based on topological
relations among the graphlets, we uncover, characterize, classify and utilize
algebraic and quantitative relations in graphlet counts or frequencies. In
\Cref{sec:g-surf}, we present a novel algorithm for efficiently
counting small subgraphs as a practical product of our theoretical
findings. We show a significant reduction in the computation cost of
generating graphlet maps on a real-world network.
In \Cref{sec:conclusion}, we comment on the connections made by our
analysis among previous subgraph counting problems and
methods. Certain shortcomings in some previous works become evident
consequently. We also remark on potential applications of the present
work.

\section{Problem description \& preliminary}
\label{sec:formal-description}

\begin{figure}[!t]
  \definecolor{tabintroblue}{rgb}{0,0.6470,0.9410}
  \newcommand{\mbf}[1]{\textbf{#1}}
  \centering
  \begin{subfigure}{0.45\linewidth}
    \centering
    \begin{subfigure}{0.4\linewidth}
      \centering
      \includegraphics[width=.7\linewidth]{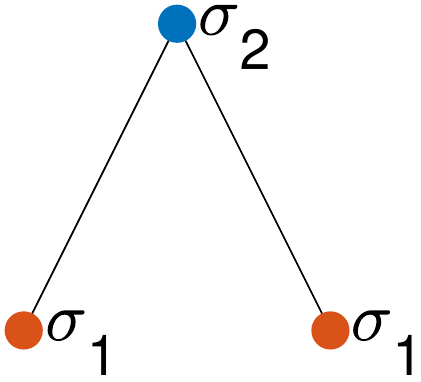}
      \\[-1em]
      \caption*{$H$}
    \end{subfigure}
    \\[0.5em]
    \begin{subfigure}{.9\linewidth}
      \centering
      \includegraphics[width=.7\linewidth]{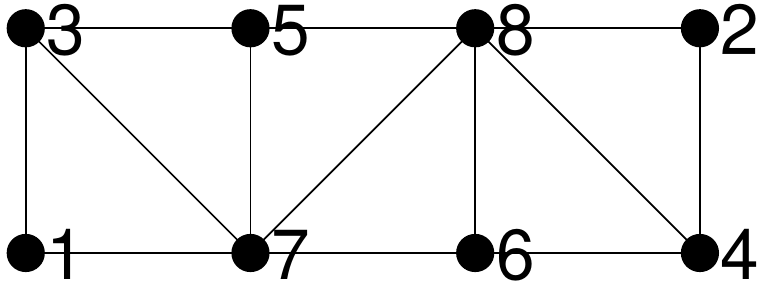}
    \end{subfigure}
  \end{subfigure}
  \begin{subfigure}{0.45\linewidth}
    \centering
    \begin{subfigure}{.7\linewidth}
      \centering
      \includegraphics[width=.7\linewidth]{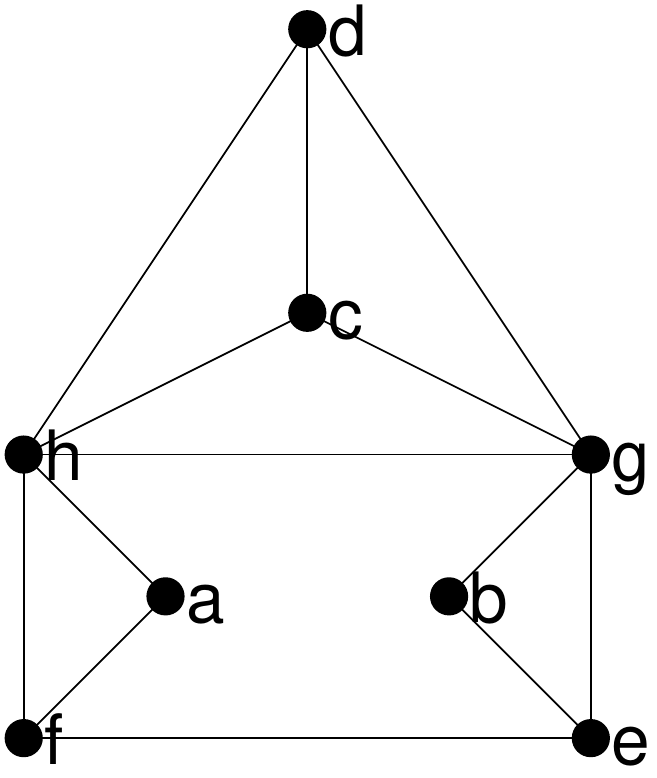}
    \end{subfigure}
  \end{subfigure}
  \\[0em]
  \begin{subfigure}{0.45\linewidth}
    \caption*{$G_{1}$}
  \end{subfigure}
  \begin{subfigure}{0.45\linewidth}
    \caption*{$G_{2}$}
  \end{subfigure}
  \\[0em]
  \begin{subfigure}{0.45\linewidth}
    \begin{subfigure}{0.9\linewidth}
\resizebox{\linewidth}{!}{%
\begin{tabular}{lcccccccc}
\toprule
$v$ & 1 & 2 & 3 & 4 & 5 & 6 & 7 & 8 \\
\midrule
$d(v)$ & 2 & 2 & 3 & 3 & 3 & 3 & 5 & 5 \\
$f(H|G)(v)$ & 4 & 4 & 4 & 4 & 7 & 7 & 9 & 9 \\
$f({\color{red}H_{\sigma_{1}}}|G)(v)$ & 4 & 4 & \textbf{3} & \textbf{3} & \textbf{6} & \textbf{6} & 3 & 3 \\
$f({\color{tabintroblue}H_{\sigma_{2}}}|G)(v)$ & 0 & 0 & \textbf{1} & \textbf{1} & \textbf{1} & \textbf{1} & 6 & 6 \\
\bottomrule
\end{tabular}%
}
     \end{subfigure}
  \end{subfigure}
  \begin{subfigure}{0.45\linewidth}
    \begin{subfigure}{0.9\linewidth}
      \centering
\resizebox{\linewidth}{!}{%
\begin{tabular}{lcccccccc}
\toprule
$v$ & a & b & c & d & e & f & g & h \\
\midrule
$d(v)$ & 2 & 2 & 3 & 3 & 3 & 3 & 5 & 5 \\
$f(H|G)(v)$ & 4 & 4 & 4 & 4 & 7 & 7 & 9 & 9 \\
$f({\color{red}H_{\sigma_{1}}}|G)(v)$ & 4 & 4 & \textbf{4} & \textbf{4} & \textbf{5} & \textbf{5} & 3 & 3 \\
$f({\color{tabintroblue}H_{\sigma_{2}}}|G)(v)$ & 0 & 0 & \textbf{0} & \textbf{0} & \textbf{2} & \textbf{2} & 6 & 6 \\
\bottomrule
\end{tabular}%
}
     \end{subfigure}
  \end{subfigure}
  \caption{Graph differentiation by orbit-specific graphlet frequency sequences.
    The template graph $H$ is $K_{1,2}$, larger than $K_{2}$ and smaller than $K_{3}$.
    The leaf nodes of $H$ in red are in orbit $\sigma_{1}$,
    the root node in blue is in orbit $\sigma_{2}$.
    Graph $G_{1}$ and graph $G_{2}$ are attributed with their respective
    frequency sequence tables at the bottom. They
    have identical degree sequences,  as shown in the first rows of the two tables.
    Their frequency sequences with respect to $H$ are also identical, as shown
    in the second rows of the tables. Graph $G_1$
    and graph $G_2$ are differentiated by their orbit-specific vertex maps or
    sequences shown in the third and fourth rows, which decompose the sequences
    in the second rows.
  }
  \label{fig:graph-differentiation}
\end{figure}

We give a formal description of the basic subgraph counting problems.
Any graph or network addressed in this paper is undirected, with
simple edges and without self-loops. A graph is denoted by
$G=G(V, E)$, with $V$ or $V(G)$ as the set of vertices or nodes and
$E$ or $E(G) \subseteq V(G) \times V(G) $ as the set of edges or
links. The primary sizes of graph $G$ are specified by
$n=n(G) = |V(G)|$ and $m=m(G) = |E(G)|$.  Graph $G'$ is a subgraph of
$G$ if $V(G') \subseteq V(G)$ and $E(G') \subseteq E(G)$.  If
$E(G') = E(G) \cap ( V(G) \times V(G) )$ in addition, $G'$ is an
induced subgraph of $G$. A graph is connected if every pair of
vertices is connected by a path.
Two graphs $G_1$ and $G_2$ are isomorphic in topological link
structure, $G_1 \cong G_2$, if there is a bijection $\phi$ between
$V(G_1)$ and $V(G_2)$ such that $(u,v) \in E(G_1)$ implies
$ (\phi(u), \phi(v)) \in E(G_2)$ and vice versa.
Graph $G$ is labeled if every vertex of $G$ has a unique label, or
equivalently, the vertices are labeled from $1$ to $n(G)$. A graph to
be characterized by its subgraph structures is referred to as a source
graph.

\subsection{One subgraph template: counts \& maps}
\label{sec:problem-description}

\begin{definition}
  \label{def:global-counts}
  {\rm (Subgraph counts over a source graph)}
  Let $G$ be a labeled source graph. Let $H$ be a connected, unlabeled
  \emph{template} graph, $n(H) \leq n(G)$.
  The \emph{gross} count (or frequency) of $H$-isomorphic subgraphs in $G$ is
  $g(H|G) = |\Gamma_{g}(H|G)|$,
  where
  $\Gamma_g(H | G) = \left\{ H' \subseteq G \mid H' \cong H \right\}$.
  Any two elements $H', H''$ in $\Gamma_{g}(H|G)$ are two different
  subgraphs, $V(H') \neq V(H'')$.
  The \emph{net} count is $f(H|G) = |\Gamma_{f}(H|G)|$,
  where
  $\Gamma_f(H | G) = \left\{ H' \in \Gamma_{g}(H | G) \mid E(H') =
    E(G) \cap (V(H')\times V(H')) \right\}$. 
  \end{definition}
  The template size $n(H)$ is assumed bounded, throughout the rest of the
  paper, independent of any source graph.
  By \Cref{def:global-counts} the count of the induced subgraphs is
  more constrained, $f(H|G) \leq g(H|G)$.
  For any $K_p$, the clique
  with $p$ nodes, $p\geq 0$, $g( K_p |G ) = f( K_p | G)$. In
  particular, $f( K_2 | G)$ is the total number of edges in $G$,
  $m(G) = f( K_2 | G)$.  Two graphs with the same number of edges may
  be differentiated by their degree sequences. By extension, two
  networks with equal global counts with respect to (w.r.t.)  the same
  template $H$, by \Cref{def:global-counts}, may be differentiated by
  local counts at vertices.

  \begin{definition}
   \label{def:local-counts-at-vertices}    
  {\rm (Subgraph counts at incidence vertices)}
  Let $G$ and $H$ be defined as in \Cref{def:global-counts}.  For
  every vertex $v$ in $V(G)$, let 
  $\Gamma_{g}(H|G)(v) = \{ H' \in \Gamma_{g}(H|G) \mid v \in V(H') \}$
  and let
  $ \Gamma_{f}(H|G)(v) = \{ H' \in \Gamma_{f}(H|G) \mid v \in V(H')\}
  $.
  The gross count (or frequency) of $H$-isomorphic subgraphs incident with $v$ is
  $g(H|G)(v) = |\Gamma_{g}(H|G)(v)|$; the net count at $v$ is
  $f(H|G)(v) = |\Gamma_{f}(H|G)(v)|$.
\end{definition}
In particular, $f(K_2|G)(v)$ is $d(v)$, the degree of $v$. The essence
of \Cref{def:local-counts-at-vertices} is the introduction of a vertex
map over $V(G)$ with respect to any particular template $H$. For
instance, $f(K_2|G)(v)$ is a map onto vertex
positions/locations/labels. The degree map gives rise to the degree
sequence, which sorts the degrees in descending or ascending order
with the vertex label or position information discarded.  Similarly,
$f(H | G)(v)$ gives rise to the frequency sequence of $G$ {w.r.t.}
$H$, see the triangle-frequency map and triangle-frequency sequence in
\Cref{fig:frequency-map-sequence}.

The subgraph counts and maps defined above are graph invariants. They
encode graph information.

\subsection{Multi-channel graph encoding}
\label{sec:feature-vector-map}

The coding capacity for graph representation and differentiation can
be increased by using multiple templates
${\cal H} = \{H_p,\, p=1,2,\cdots, P \}$, $P>1$. For instance, in
addition to the edge graph $K_{2}$ for encoding the degree
information, one may also use $K_{1,2}$ of bi-fork pattern to encode
more structural information.  In \Cref{fig:graph-differentiation}, two
simple templates $K_2$ and $K_{1,2}$ are used to compare and
differentiate graph $G_1$ and graph $G_2$.

Graph encoding with multiple templates has more discriminative
capacity than with a single template.  Let ${\cal H} $ be a collection
of template graphs. Let $G$ be a source graph. By
\Cref{def:local-counts-at-vertices}, each template $H\in {\cal H} $
identifies with a unique net-frequency map (or heat map) over
$V(G)$. There are multiple views at multiple granularity levels:
local, regional and global.
Locally at each vertex $v \in V(G)$, a unique
$| {\cal H} |$-dimensional frequency (feature) vector
$ f( {\cal H} |G)(v) $ is uniquely defined.  The frequency vector at
vertex $v$ encodes the topological structures in a neighborhood of the
vertex;
the frequency maps capture spatial and statistical information of
pattern distributions and inter-pattern association or disassociation
over the entire source graph or large regional subgraphs.

Besides the use of multiple template patterns, we describe the concept and
approach of sub-channel decomposition for increasing code capacity without
resorting to a new template pattern. For example, in \cite{floros2020a}
only two template patterns $K_2$ and $K_{1,2}$ are used to detect
dynamic changes in temporal sequences of large, real-world networks.
Three net-frequency maps are generated per network with greater
discriminative power at about the same cost for generating two maps.

Sub-channel decomposition applies to any template graph $H$ with more than
one orbits. The node set $V(H)$ can be uniquely partitioned into
orbits, namely, disjoint subsets of equivalent nodes that are
reflective, symmetric and transitive under automorphisms.
An automorphism on $H$ is a link-invariant bijection $\phi$ on $V(H)$,
i.e., $(u,v) \in E(H)$ if and only if
$(\phi(u), \phi(v)) \in E(H)$. When $H\cong H'$, we denote by
$\sigma \cong \sigma'$ the correspondence between orbit $\sigma$ of
$H$ and orbit $\sigma'$ of $H'$. In \Cref{fig:graph-differentiation},
the template $K_{1,2}$ has two orbits, $\sigma_{1}$ and $\sigma_{2}$,
which are color coded in red and blue, respectively.

\begin{definition}
  \label{def:orbit-specific-counts}
{\rm (Subgraph counts at orbit-specific incidence vertices.)} 
Let $G$ and $H$ be defined as in \Cref{def:global-counts}.
Denote by $H_{\sigma}$ the template $H$ with orbit $\sigma$ designated
for incidence.  Let
$\Gamma_{g}(H_{\sigma}|G)(v) = \{ H'_{\sigma'} \in \Gamma_{g}(H|G)(v) \mid v \in \sigma'\, \mbox{\rm
  and }\, \sigma' \cong \sigma \}$.
Let
$\Gamma_{f}(H_{\sigma}|G)(v) = \{ H'_{\sigma'} \in \Gamma_{f}(H|G)(v)
\mid v \in \sigma'\, \mbox{\rm and }\, \sigma' \cong \sigma \}$.
The gross count (or frequency) of $H$-isomorphic subgraphs with $\sigma$-specific
incidence vertex at $v$ is
$g(H_{\sigma}|G)(v) = |\Gamma_{g}(H_{\sigma}|G)(v)|$.
The net count at $v$ is
$f(H_{\sigma}|G)(v) = |\Gamma_{f}(H_{\sigma}|G)(v)|$.
\end{definition}
We can precisely describe the sub-channel decomposition property as
follows,  $\forall v \!\in\!  V(G)$, 
\begin{equation}
  \label{eq:sigma-frequency-split}
  \begin{array}{rc} 
    f( H|G)(v) \! & \! = \displaystyle 
                 \sum_{\sigma \subset V(H) } f( H_{\sigma} | G )(v), 
    \\ 
    g( H|G)(v) \! &\! = \displaystyle 
                 \sum_{\sigma \subset V(H) } g( H_{\sigma} | G )(v) . 
  \end{array}
\end{equation}
In \Cref{fig:graph-differentiation}, graph $G_1$ and graph $G_2$ have
identical degree sequences and identical $K_{1,2}$-frequency sequences.
They are differentiated by the orbit-specific sequences {w.r.t} $K_{1,2}$.

\begin{table}
  \centering
  \caption{The size sequence of graphlet families ${\cal H}_s$ (graphlets
    with designated orbits) and the size sequence of
    families $\hat{\cal H}_s$ (graphlets without orbit partition),
    $1 \leq s \leq 8$. Each sequence grows
    exponentially with $s$, the number of nodes in a family. }
  \label{tab:number-graphlets}
    \begin{tabular}{lrrrrrrrr}
      \toprule
      $s$
      & 1 & 2 & 3 & 4 & 5 & 6 & 7 & 8 \\
      \midrule
      $|\mathcal{\hat{H}}_{s}|$
      & \num{1} & \num{1} & \num{2} & \num{6} & \num{21}
      & \num{112} & \num{853} & \num{11117}
      \\
      $|\mathcal{H}_{s}|$
      & \num{1} & \num{1} & \num{3} & \num{11} & \num{58}
      & \num{407} & \num{4306} & \num{72489}
      \\
      \bottomrule
    \end{tabular}
\end{table}

By utilizing the topological structure in a pattern template, the
sub-channel encoding approach increases the discriminative power
without or with little increase in the computation cost of subgraph counting.
Assume the templates in ${\cal H} $ are made orbit-specific out of $P$
mutually non-isomorphic patterns, as in
\Cref{def:orbit-specific-counts}. Denote by $a_p$ the number of orbits
in a unique pattern $H_p$. Then, the code length of the frequency vector is
$| {\cal H} | = \sum_{1\leq p \leq P} a_p $. The code length
is greater than $P$. The difference is an indicator of the
increased coding capacity for differentiating local structures.
The first use of orbit-specific subgraph templates is seen in the
original work of biological network analysis with graphlet degree
distributions by Pr{\v z}ulj, Corneil and Jurisica in
2004~\cite{przulj2004}.  The above description by the present work
elucidates, explains and characterizes the orbit-specific templates as
sub-channel decomposition from the multi-channel encoding perspective.

By multi-channel graph encoding we refer to the use of multiple templates
with optional use of the sub-channel decomposition approach.  There are
deeper potential and additional benefits with multi-channel graph
coding. 
  \begin{inparaenum}[(a)]
  \item The frequency features are derived, self-learned, from the
    source graph only. They can be joined with other attributes, or
    used to validate other features learned by different approaches.
    \item The frequencies of a source graph are coupled by vertex
    collocation.  This ``spatial'' collocation property increases the
    discriminative power. The frequency sequences are not independent
    of each other. Even when two different source graphs have
    identical sequences with respect to every template individually,
    their difference may be detected by comparing their vector-valued
    frequency sequences in lexicographical order.
    \item The frequency vectors can be used to assess or detect self
    similarities within a source graph, among vertices or vertex
    subsets.
  \item By using multiple connected templates, we have effectively
    included the case in which a template is composed of more than one
    connected components, especially for motif detection or discovery.
  \end{inparaenum}

\begin{figure}[!t]
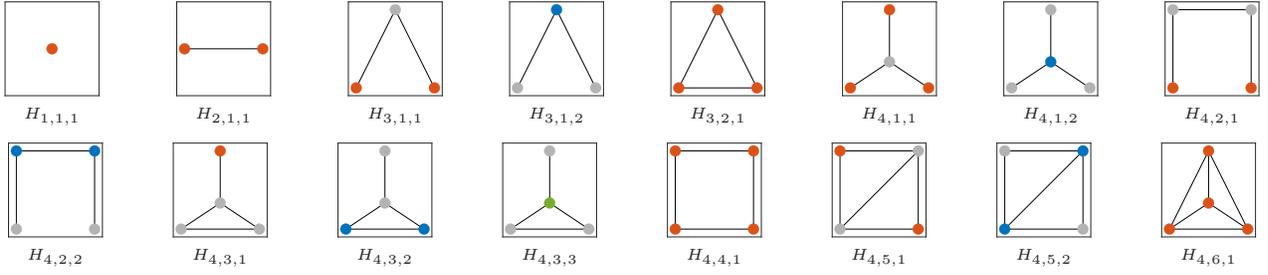

  \centering
  \newcommand{\figGraphletWidth}{0.07\linewidth}
  \newcommand{\familyID}{0}
  \newcommand{\graphletID}{0}
  \newcommand{\orbitID}{0}
  \renewcommand{\familyID}{1}
  \hspace*{\fill}
  \renewcommand{\graphletID}{1}\renewcommand{\orbitID}{1}
\begin{subfigure}{\figGraphletWidth}
  \centering
  \includegraphics[width=\linewidth]{%
    images/graphlet-portraits/orbit-\familyID-\graphletID-\orbitID}
  \\[-0.5em]
  \caption*{\tiny $H_{\familyID,\graphletID,\orbitID}$}
\end{subfigure}
\hspace*{\fill}
  \renewcommand{\familyID}{2}
  \renewcommand{\graphletID}{1}\renewcommand{\orbitID}{1}
\begin{subfigure}{\figGraphletWidth}
  \centering
  \includegraphics[width=\linewidth]{%
    images/graphlet-portraits/orbit-\familyID-\graphletID-\orbitID}
  \\[-0.5em]
  \caption*{\tiny $H_{\familyID,\graphletID,\orbitID}$}
\end{subfigure}
\hspace*{\fill}
  \renewcommand{\familyID}{3}
  \renewcommand{\graphletID}{1}\renewcommand{\orbitID}{1}
\begin{subfigure}{\figGraphletWidth}
  \centering
  \includegraphics[width=\linewidth]{%
    images/graphlet-portraits/orbit-\familyID-\graphletID-\orbitID}
  \\[-0.5em]
  \caption*{\tiny $H_{\familyID,\graphletID,\orbitID}$}
\end{subfigure}
\hspace*{\fill}
   \renewcommand{\graphletID}{1}\renewcommand{\orbitID}{2}
\begin{subfigure}{\figGraphletWidth}
  \centering
  \includegraphics[width=\linewidth]{%
    images/graphlet-portraits/orbit-\familyID-\graphletID-\orbitID}
  \\[-0.5em]
  \caption*{\tiny $H_{\familyID,\graphletID,\orbitID}$}
\end{subfigure}
\hspace*{\fill}
  \renewcommand{\graphletID}{2}\renewcommand{\orbitID}{1}
\begin{subfigure}{\figGraphletWidth}
  \centering
  \includegraphics[width=\linewidth]{%
    images/graphlet-portraits/orbit-\familyID-\graphletID-\orbitID}
  \\[-0.5em]
  \caption*{\tiny $H_{\familyID,\graphletID,\orbitID}$}
\end{subfigure}
\hspace*{\fill}
  \renewcommand{\familyID}{4}
  \renewcommand{\graphletID}{1}\renewcommand{\orbitID}{1}
\begin{subfigure}{\figGraphletWidth}
  \centering
  \includegraphics[width=\linewidth]{%
    images/graphlet-portraits/orbit-\familyID-\graphletID-\orbitID}
  \\[-0.5em]
  \caption*{\tiny $H_{\familyID,\graphletID,\orbitID}$}
\end{subfigure}
\hspace*{\fill}
   \renewcommand{\graphletID}{1}\renewcommand{\orbitID}{2}
\begin{subfigure}{\figGraphletWidth}
  \centering
  \includegraphics[width=\linewidth]{%
    images/graphlet-portraits/orbit-\familyID-\graphletID-\orbitID}
  \\[-0.5em]
  \caption*{\tiny $H_{\familyID,\graphletID,\orbitID}$}
\end{subfigure}
\hspace*{\fill}
   \renewcommand{\graphletID}{2}\renewcommand{\orbitID}{1}
\begin{subfigure}{\figGraphletWidth}
  \centering
  \includegraphics[width=\linewidth]{%
    images/graphlet-portraits/orbit-\familyID-\graphletID-\orbitID}
  \\[-0.5em]
  \caption*{\tiny $H_{\familyID,\graphletID,\orbitID}$}
\end{subfigure}
\hspace*{\fill}
   \\[0.5em]
  \hspace*{\fill}
  \renewcommand{\graphletID}{2}\renewcommand{\orbitID}{2}
\begin{subfigure}{\figGraphletWidth}
  \centering
  \includegraphics[width=\linewidth]{%
    images/graphlet-portraits/orbit-\familyID-\graphletID-\orbitID}
  \\[-0.5em]
  \caption*{\tiny $H_{\familyID,\graphletID,\orbitID}$}
\end{subfigure}
\hspace*{\fill}
   \renewcommand{\graphletID}{3}\renewcommand{\orbitID}{1}
\begin{subfigure}{\figGraphletWidth}
  \centering
  \includegraphics[width=\linewidth]{%
    images/graphlet-portraits/orbit-\familyID-\graphletID-\orbitID}
  \\[-0.5em]
  \caption*{\tiny $H_{\familyID,\graphletID,\orbitID}$}
\end{subfigure}
\hspace*{\fill}
   \renewcommand{\graphletID}{3}\renewcommand{\orbitID}{2}
\begin{subfigure}{\figGraphletWidth}
  \centering
  \includegraphics[width=\linewidth]{%
    images/graphlet-portraits/orbit-\familyID-\graphletID-\orbitID}
  \\[-0.5em]
  \caption*{\tiny $H_{\familyID,\graphletID,\orbitID}$}
\end{subfigure}
\hspace*{\fill}
   \renewcommand{\graphletID}{3}\renewcommand{\orbitID}{3}
\begin{subfigure}{\figGraphletWidth}
  \centering
  \includegraphics[width=\linewidth]{%
    images/graphlet-portraits/orbit-\familyID-\graphletID-\orbitID}
  \\[-0.5em]
  \caption*{\tiny $H_{\familyID,\graphletID,\orbitID}$}
\end{subfigure}
\hspace*{\fill}
   \renewcommand{\graphletID}{4}\renewcommand{\orbitID}{1}
\begin{subfigure}{\figGraphletWidth}
  \centering
  \includegraphics[width=\linewidth]{%
    images/graphlet-portraits/orbit-\familyID-\graphletID-\orbitID}
  \\[-0.5em]
  \caption*{\tiny $H_{\familyID,\graphletID,\orbitID}$}
\end{subfigure}
\hspace*{\fill}
   \renewcommand{\graphletID}{5}\renewcommand{\orbitID}{1}
\begin{subfigure}{\figGraphletWidth}
  \centering
  \includegraphics[width=\linewidth]{%
    images/graphlet-portraits/orbit-\familyID-\graphletID-\orbitID}
  \\[-0.5em]
  \caption*{\tiny $H_{\familyID,\graphletID,\orbitID}$}
\end{subfigure}
\hspace*{\fill}
   \renewcommand{\graphletID}{5}\renewcommand{\orbitID}{2}
\begin{subfigure}{\figGraphletWidth}
  \centering
  \includegraphics[width=\linewidth]{%
    images/graphlet-portraits/orbit-\familyID-\graphletID-\orbitID}
  \\[-0.5em]
  \caption*{\tiny $H_{\familyID,\graphletID,\orbitID}$}
\end{subfigure}
\hspace*{\fill}
   \renewcommand{\graphletID}{6}\renewcommand{\orbitID}{1}
\begin{subfigure}{\figGraphletWidth}
  \centering
  \includegraphics[width=\linewidth]{%
    images/graphlet-portraits/orbit-\familyID-\graphletID-\orbitID}
  \\[-0.5em]
  \caption*{\tiny $H_{\familyID,\graphletID,\orbitID}$}
\end{subfigure}
\hspace*{\fill}
  \\[0.5em]
  \caption{%
    The dictionary of graphlets in the first four families ${\cal H}_s$,
    $s=1,2,3,4$. 
    Each graphlet $H$ is identified by an index triplet
    $(s,p,\sigma)$, with $s = n(H)$, $p$ indexed to a unique
    topological pattern, and $\sigma$ to a specific orbit, as
    described in \Cref{subsec:graphlet-dictionary}. Orbits indexed by
    $1$, $2$ and $3$ are color coded red, blue and green,
    respectively.  The graphlets are placed from left to right, top to
    bottom, by the ordering scheme {\sc seira} in \Cref{subsec:graphlet-dictionary}. }
  \label{fig:16-graphlet-portraits}
\end{figure}

\subsection{Graphlet lattice neighborhoods}
\label{subsec:graphlet-dictionary}

We focus on a system of multi-channel encoding graph elements known as
graphlets. 
In this section, we
give a clarified description of graphlets independent of source graphs
and graphlet frequencies on any given source graph.
More importantly, we introduce intrinsic topological relations among
graphlets in the language of graph
theory~\cite{diestel2017,george1993} and lattice
theory~\cite{birkhoff1938,birkhoff1948}. These topological
relationships are the foundation of the algebraic and quantitative
relations in graphlet frequencies we further uncover and present in
the rest of the paper.

\begin{figure*}[!t]
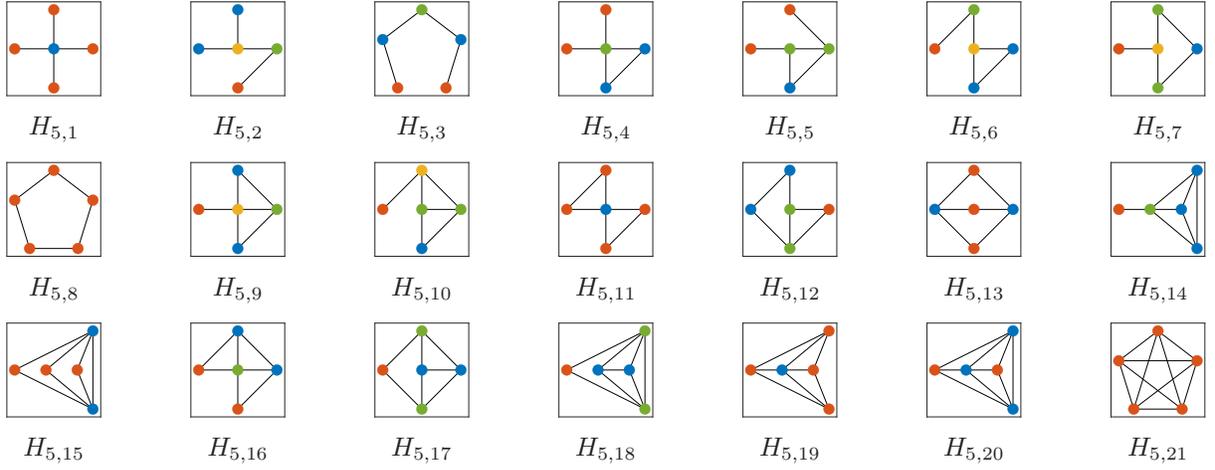

  \centering
  \newcommand{\figGraphletWidth}{0.07\linewidth}
  \newcommand{\familyID}{0}
  \newcommand{\graphletID}{0}
  \renewcommand{\familyID}{5}
  \hspace*{\fill}
  \renewcommand{\graphletID}{1}%
\begin{subfigure}{\figGraphletWidth}
  \includegraphics[width=\linewidth]{%
    images/graphlet-portraits/graphlet-\familyID-\graphletID}
  \caption*{$H_{\familyID,\graphletID}$}
\end{subfigure}
\hspace*{\fill}
   \renewcommand{\graphletID}{2}%
\begin{subfigure}{\figGraphletWidth}
  \includegraphics[width=\linewidth]{%
    images/graphlet-portraits/graphlet-\familyID-\graphletID}
  \caption*{$H_{\familyID,\graphletID}$}
\end{subfigure}
\hspace*{\fill}
   \renewcommand{\graphletID}{3}%
\begin{subfigure}{\figGraphletWidth}
  \includegraphics[width=\linewidth]{%
    images/graphlet-portraits/graphlet-\familyID-\graphletID}
  \caption*{$H_{\familyID,\graphletID}$}
\end{subfigure}
\hspace*{\fill}
   \renewcommand{\graphletID}{4}%
\begin{subfigure}{\figGraphletWidth}
  \includegraphics[width=\linewidth]{%
    images/graphlet-portraits/graphlet-\familyID-\graphletID}
  \caption*{$H_{\familyID,\graphletID}$}
\end{subfigure}
\hspace*{\fill}
   \renewcommand{\graphletID}{5}%
\begin{subfigure}{\figGraphletWidth}
  \includegraphics[width=\linewidth]{%
    images/graphlet-portraits/graphlet-\familyID-\graphletID}
  \caption*{$H_{\familyID,\graphletID}$}
\end{subfigure}
\hspace*{\fill}
   \renewcommand{\graphletID}{6}%
\begin{subfigure}{\figGraphletWidth}
  \includegraphics[width=\linewidth]{%
    images/graphlet-portraits/graphlet-\familyID-\graphletID}
  \caption*{$H_{\familyID,\graphletID}$}
\end{subfigure}
\hspace*{\fill}
   \renewcommand{\graphletID}{7}%
\begin{subfigure}{\figGraphletWidth}
  \includegraphics[width=\linewidth]{%
    images/graphlet-portraits/graphlet-\familyID-\graphletID}
  \caption*{$H_{\familyID,\graphletID}$}
\end{subfigure}
\hspace*{\fill}
   \\[0.5em]
  \hspace*{\fill}
  \renewcommand{\graphletID}{8}%
\begin{subfigure}{\figGraphletWidth}
  \includegraphics[width=\linewidth]{%
    images/graphlet-portraits/graphlet-\familyID-\graphletID}
  \caption*{$H_{\familyID,\graphletID}$}
\end{subfigure}
\hspace*{\fill}
   \renewcommand{\graphletID}{9}%
\begin{subfigure}{\figGraphletWidth}
  \includegraphics[width=\linewidth]{%
    images/graphlet-portraits/graphlet-\familyID-\graphletID}
  \caption*{$H_{\familyID,\graphletID}$}
\end{subfigure}
\hspace*{\fill}
   \renewcommand{\graphletID}{10}%
\begin{subfigure}{\figGraphletWidth}
  \includegraphics[width=\linewidth]{%
    images/graphlet-portraits/graphlet-\familyID-\graphletID}
  \caption*{$H_{\familyID,\graphletID}$}
\end{subfigure}
\hspace*{\fill}
   \renewcommand{\graphletID}{11}%
\begin{subfigure}{\figGraphletWidth}
  \includegraphics[width=\linewidth]{%
    images/graphlet-portraits/graphlet-\familyID-\graphletID}
  \caption*{$H_{\familyID,\graphletID}$}
\end{subfigure}
\hspace*{\fill}
   \renewcommand{\graphletID}{12}%
\begin{subfigure}{\figGraphletWidth}
  \includegraphics[width=\linewidth]{%
    images/graphlet-portraits/graphlet-\familyID-\graphletID}
  \caption*{$H_{\familyID,\graphletID}$}
\end{subfigure}
\hspace*{\fill}
   \renewcommand{\graphletID}{13}%
\begin{subfigure}{\figGraphletWidth}
  \includegraphics[width=\linewidth]{%
    images/graphlet-portraits/graphlet-\familyID-\graphletID}
  \caption*{$H_{\familyID,\graphletID}$}
\end{subfigure}
\hspace*{\fill}
   \renewcommand{\graphletID}{14}%
\begin{subfigure}{\figGraphletWidth}
  \includegraphics[width=\linewidth]{%
    images/graphlet-portraits/graphlet-\familyID-\graphletID}
  \caption*{$H_{\familyID,\graphletID}$}
\end{subfigure}
\hspace*{\fill}
   \\[0.5em]
  \hspace*{\fill}
  \renewcommand{\graphletID}{15}%
\begin{subfigure}{\figGraphletWidth}
  \includegraphics[width=\linewidth]{%
    images/graphlet-portraits/graphlet-\familyID-\graphletID}
  \caption*{$H_{\familyID,\graphletID}$}
\end{subfigure}
\hspace*{\fill}
   \renewcommand{\graphletID}{16}%
\begin{subfigure}{\figGraphletWidth}
  \includegraphics[width=\linewidth]{%
    images/graphlet-portraits/graphlet-\familyID-\graphletID}
  \caption*{$H_{\familyID,\graphletID}$}
\end{subfigure}
\hspace*{\fill}
   \renewcommand{\graphletID}{17}%
\begin{subfigure}{\figGraphletWidth}
  \includegraphics[width=\linewidth]{%
    images/graphlet-portraits/graphlet-\familyID-\graphletID}
  \caption*{$H_{\familyID,\graphletID}$}
\end{subfigure}
\hspace*{\fill}
   \renewcommand{\graphletID}{18}%
\begin{subfigure}{\figGraphletWidth}
  \includegraphics[width=\linewidth]{%
    images/graphlet-portraits/graphlet-\familyID-\graphletID}
  \caption*{$H_{\familyID,\graphletID}$}
\end{subfigure}
\hspace*{\fill}
   \renewcommand{\graphletID}{19}%
\begin{subfigure}{\figGraphletWidth}
  \includegraphics[width=\linewidth]{%
    images/graphlet-portraits/graphlet-\familyID-\graphletID}
  \caption*{$H_{\familyID,\graphletID}$}
\end{subfigure}
\hspace*{\fill}
   \renewcommand{\graphletID}{20}%
\begin{subfigure}{\figGraphletWidth}
  \includegraphics[width=\linewidth]{%
    images/graphlet-portraits/graphlet-\familyID-\graphletID}
  \caption*{$H_{\familyID,\graphletID}$}
\end{subfigure}
\hspace*{\fill}
   \renewcommand{\graphletID}{21}%
\begin{subfigure}{\figGraphletWidth}
  \includegraphics[width=\linewidth]{%
    images/graphlet-portraits/graphlet-\familyID-\graphletID}
  \caption*{$H_{\familyID,\graphletID}$}
\end{subfigure}
\hspace*{\fill}
  \\[0.5em]
  \caption{The penta-node graphlet family $\hat{\mathcal{H}}_{5}$ and
    family $\mathcal{H}_{5}$. The graphlets $H_{5,p}$,
    $1 \leq p \leq 21$, in $\hat{\mathcal{H}}_{5}$ are mutually
    non-isomorphic.  By the sub-channel
    decomposition, $\hat{\mathcal{H}}_{5}$ gives rise to 58
    orbit-specific graphlets in $\mathcal{H}_{5}$. Orbits are color
    coded red, blue, green and yellow.  For instance, the green orbit
    of graphlet $H_{5,7,3}$ has two nodes.  The graphlets are placed
    from left to right, top to bottom, by the ordering scheme {\sc
      seira} in \Cref{subsec:graphlet-dictionary}.}
  \label{fig:graphlet-portraits-5node}
\end{figure*}

A graph element, a.k.a. \emph{graphlet}, is a connected template graph
with a small number of nodes with or without a designated incidence orbit,
see \Cref{def:local-counts-at-vertices,def:orbit-specific-counts}.
All $s$-node graphlets with designated incidence orbits form a
natural family ${\cal H}_s$, $s\geq 1$. \Cref{fig:16-graphlet-portraits}
displays the graphlets in the
first four families ${\mathcal{H}}_{s}$, $s=1,2,3,4$;
\Cref{fig:graphlet-portraits-5node}, the graphlets in family
$\mathcal{H}_{5}$.
All $s$-node graphlets without orbit partition form the family
$\hat{\mathcal{H}}_s$, in which all graphlets are mutually
non-isomorphic. It is beneficial to utilize smaller graphlets as much
as possible for graph encoding.  We therefore consider graph encoding
with graphlets up to a certain number $t$ of nodes.
The length of the frequency-vector code is the sum of the chosen
family sizes.
For any $s>2$, the size of $\mathcal{H}_{s}$ is
$| {\cal H}_{s} | = \sum_{H_p\in \hat{\mathcal{H}_s}} a_p > |
\hat{\cal H}_s | $, where $a_p$ is the number of orbits in pattern
template $H_p$.  \Cref{tab:number-graphlets} lists the family sizes up to $8$
nodes. In practice, a small number of graphlet families gives a
desirable code length.

Any graphlet collection is a partially ordered set (poset) with the
binary relation defined by subgraph inclusion: graphlet $H_i$ precedes
graphlet $H_j$, denoted by $H_i \prec H_j$, if $H_i$ is a proper
subgraph of $H_j$. The union of the first $t$ families
$\mathcal{L}_{t} = \bigcup_{s\leq t} \mathcal{H}_s$ is a lattice.
\Cref{fig:hasse-2to5} shows the Hasse diagram of $\mathcal{L}_5$, in
which we include the null graph $\emptyset$.  The lattice with some of
the families removed is a sub-lattice.  Alternatively, any family
$\mathcal{H}_s$, $s\leq t$, together with $\emptyset$, is a
sub-lattice.  Lattice $\mathcal{L}_5$ is a sub-lattice of a larger
lattice with more graphlets included. Let $G$ be a source graph. By
graph encoding with the graphlets in the first $t$ families, the
frequency vector at any vertex $v\in V(G)$ is defined on the lattice
$\mathcal{L}_{t}$. The singleton count at any vertex is always
$1$, which sums to the total count of nodes in a graph or a
subgraph. At each vertex, the lattice $\mathcal{L}_{t}$ with the
singleton removed defines the neighborhood architecture. Lattice
$\mathcal{L}_2$ with the singleton removed is the conventional
neighborhood. Extending the degree $d(v)$, the frequency vector at $v$
over $\mathcal{L}_t$, $t>2$, quantitatively encodes the multi-order
topology structures at the vertex.

We elaborate on a few important details about lattice $\mathcal{L}_t$
and its counterpart $\hat{\mathcal{L}}_t$.  We specify a graphlet $H$
in $\mathcal{L}_t$ with an index triplet $(s,p,\sigma)$, $s$ is the
number of nodes in $H$, $p$ identifies with a unique topological
pattern in $\mathcal{H}_s$, and $\sigma$ identifies with a unique
orbit of $H$.
That is, the graphlet $H_{s,p}$ in $\hat{ \mathcal{H}}_s$ is expanded,
by orbit partition, to the subset
$\mathcal{H}_{s,p} = \{ H_{s,p,\sigma} \} $ in $\mathcal{H}_s$.
\Cref{fig:sub-lattices-quad-node} shows the particular relationship
between $\mathcal{H}_{4}$ and $\hat{\mathcal{H}}_{4}$.
In $\mathcal{L}_{t}$ or $\hat{\mathcal{L}}_{t}$, the length of the
path from any graphlet $H$ to the null element is equal to $m(H) + 1$,
where $m(H)$ is the number of edges in $H$.  The lattice height is
$m(K_t) + 1$, where $K_t$ is the $t$-node clique.  In terms of
neighborhood architectures, lattice $\mathcal{L}_t$ provides more room
for encoding and differentiating orientational information.

\begin{figure}[!t]
  \centering
  \begin{tikzpicture}
    \node[anchor=south west,inner sep=0] (image) at (0,0) {
      \includegraphics[width=.8\linewidth]{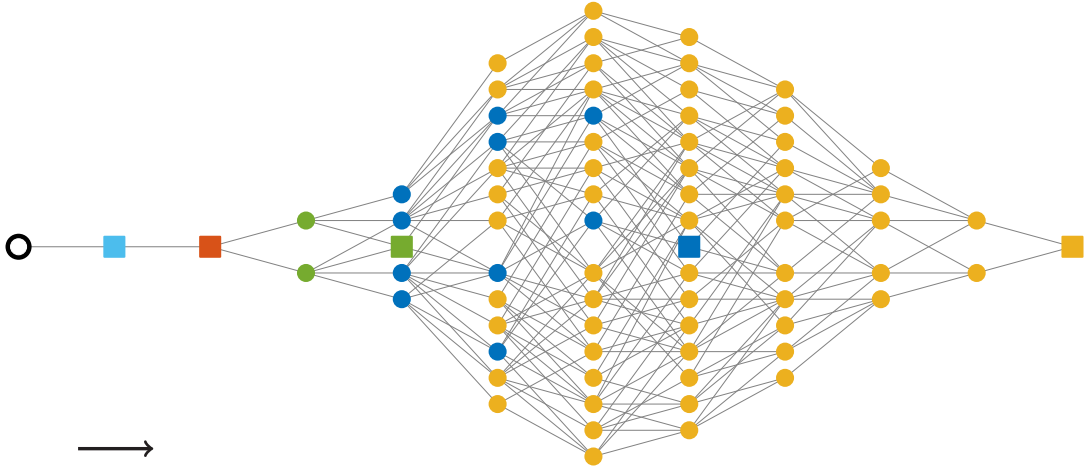}
    };
    \draw[line width=0.45mm, ->] (1,0.3) -- (2,0.3);
  \end{tikzpicture}
  \caption{The Hasse diagram of lattice
    $\mathcal{L}_{5} = \{\mathcal{H}_{s},\, 1\leq s \leq 5\}$ formed
    by the subgraph inclusion relationship among graphlets in the first five
    families. Each node element in the diagram represents a graphlet,
    except the null graph $\emptyset$ depicted by an un-filled circle
    to the leftmost.  Each edge, directed from left to right,
    represents the covering relationship. Cliques $K_{s}$,
    $1\leq s \leq 5$, are depicted with square markers in the central
    row; non-clique graphlets, with filled circles. The graphlets are
    color-coded by the families: ${{\cal H}}_{1}$ in cyan,
    ${{\cal H}}_{2}$ in red, ${{\cal H}}_{3}$ in green,
    ${{\cal H}}_{4}$ in blue, and ${{\cal H}}_{5}$ in yellow.  The
    number of graphlets in each family is in
    \Cref{tab:number-graphlets}. Graphlets with the same path length
    to the null element are placed in the same vertical layer (at the
    same location along the horizontal or $x$ axis); they
    also have the same number of edges. Sub-lattices can be extracted
    from the lattice, see \Cref{fig:sub-lattices-quad-node} for two
    instances. The lattice $\mathcal{L}_{5}$ itself is a sub-lattice
    of a larger one with more graphlet families included. }
  \label{fig:hasse-2to5}
\end{figure}

We introduce a total ordering scheme, {\sc seira}, with simple induction rules that
are self-contained and extendable.  The scheme preserves the inclusion
relationship between any two directly comparable graphlets and places
a sequential order between any two graphlets non-comparable by
inclusion.  The ordering between any two graphlets $H_{s,p.\sigma}$
and $H_{s', p', \sigma'}$ is determined by the lexicographic ordering
detailed below.
\begin{list}{}{\setlength{\leftmargin}{1.7em}
    \setlength{\itemsep}{-.33em}
    \setlength{\topsep}{0em}
  }
\item[(a)] The integers $s$ and $s'$ are naturally ordered.
\item[(b)] When $s=s'$, we assign a unique integer index to each
topologically unique pattern in the same family a unique integer
index. That is, $p=p'$ if and only if
$H_{s,p,\sigma} \cong H_{s,p', \sigma'}$. We let $p < p'$ if
$m(H_{s,p, \sigma}) < m(H_{s,p', \sigma'})$, i.e., $H_{s,p,\sigma}$
has a shorter path to the null graph on the Hasse diagram.  For
example, $K_{1,2}$ is placed ahead of $K_3$. In the case of a tie,
$m(H_{s,p, \sigma}) = m(H_{s,p', \sigma'})$, we break the tie by the
first discrepancy in the frequency sequences (in non-decreasing
ordering) drawn from $ \{ f( H_k | H_{s, p, \sigma})\}$ and
$\{ f( H_k | H_{s, p', \sigma'}) \} $, where $H_k$ is already placed
ahead of $H_{s,p,\sigma}$ and $H_{s,p', \sigma'}$ by {\sc seira} itself.
\item[(c)] When $(s,p) = (s',p')$, we assign each orbit a unique
  integer.  That is, $\sigma = \sigma'$ if the two graphlets are the
  one and the same. We let $\sigma < \sigma'$ if
  $f( H_k| H_{s,p,\sigma}) (v)$ at $v\in \sigma$ is lower than
  $f( H_k| H_{s,p,\sigma'}) (v')$ at $v'\in \sigma'$ where
  $H_k$ is as described in (b). 
\end{list}
By the use of frequency codes with precedent graphlets, {\sc seira} makes
quantitative comparisons between two graphlets that are
non-comparable by subgraph inclusion.  The ordering scheme is used in graphlet
placements in
\Cref{fig:16-graphlet-portraits,fig:graphlet-portraits-5node,fig:sub-lattices-quad-node}
and in matrices composed of graphlet frequencies in
\Cref{sec:feature-conversion,sec:g-surf}.  In general, {\sc seira} can
be applied to any collection of graphlets.

\begin{figure*}[!t]
  \newcommand{\mc}[1]{\multicolumn{#1}{c}{\phantom{x}}}
  \newcommand{\mcv}[2]{\multicolumn{#1}{c}{#2}}
  \centering
  \hspace*{\fill}
  \begin{subfigure}{0.21\linewidth}
    \centering
    \begin{tikzpicture}
      \node[anchor=south west,inner sep=0] (image) at (0,0) {
        \includegraphics[width=\linewidth]{%
          images/feature-conversions/hasse_4-node_no-orbits}
      };
      \draw[line width=0.45mm, ->] (0.1,-.5) -- (0.9,-.5);
    \end{tikzpicture}
  \end{subfigure}
  \begin{subfigure}{0.23\linewidth}
    \centering
    \resizebox{.7\textwidth}{!}{
      \begin{tabular}{|rrrrrr|}
        \hline
        2 & 2 & 3 & 1 & 2 & 1 \\ \hline
        \multicolumn{6}{c}{\phantom{x}} \\ \hline
        1 &   & 1 &   & 2 & 4 \\
          & 1 & 2 & 4 & 6 & 12 \\
          &   & 1 &   & 4 & 12 \\
          &   &   & 1 & 1 & 3 \\
          &   &   &   & 1 & 6 \\
          &   &   &   &   & 1 \\
        \hline
      \end{tabular}
    }

  \end{subfigure}
  \hspace*{1cm}
  \begin{subfigure}{0.23\linewidth}
    \centering
    \begin{tikzpicture}
      \node[anchor=south west,inner sep=0] (image) at (0,0) {
        \includegraphics[width=\linewidth]{%
          images/feature-conversions/hasse_4-node}
      };
      \draw[line width=0.45mm, ->] (0.1,-.5) -- (0.9,-.5);
    \end{tikzpicture}
  \end{subfigure}
  \begin{subfigure}{0.23\linewidth}
    \resizebox{\linewidth}{!}{
      \begin{tabular}{|p{0.2cm}p{0.2cm}|p{0.2cm}p{0.2cm}|p{0.2cm}p{0.2cm}p{0.2cm}|p{0.2cm}|p{0.2cm}p{0.2cm}|p{0.2cm}|}
        \hline
        1 &   & \mc{2} & 1 & 1 & \mc{2}                  & 2          & \mcv{1}{1} & 3 \\
          & 1 & \mc{2} &   &   & \mcv{1}{1} & \mc{1}     &            & \mcv{1}{1} & 1 \\ \cline{1-4}
          &   & 1  &   & 2 & 1 & \mc{1}     & \mcv{1}{2} & 4          & \mcv{1}{2} & 6 \\
          &   &    & 1 &   & 1 & \mcv{1}{2} & \mcv{1}{2} & 2          & \mcv{1}{4} & 6 \\ \cline{3-7}
          & \mc{2} &   & 1 &   &            & \mc{1}     & 2          & \mc{1}     & 3 \\
          & \mc{2} &   &   & 1 &            & \mc{1}     & 2          & \mcv{1}{2} & 6 \\
          & \mc{2} &   &   &   & 1          & \mc{1}     &            & \mcv{1}{2} & 3 \\ \cline{5-8}
          & \mc{3}     &   &   &            & 1          & 1          & \mcv{1}{1} & 3 \\ \cline{8-10}
          & \mc{6}                          &            & 1          &            & 3 \\
          & \mc{6}                          &            &            & 1          & 3 \\ \cline{9-11}
          & \mc{7}                                       &            &            & 1 \\ \hline
      \end{tabular}
    }
  \end{subfigure}
  \hspace*{\fill}
  \\[0.5em]
  \hspace*{\fill}
  \begin{subfigure}{0.21\linewidth}
    \caption*{$\hat{\mathcal{H}}_{4}$}
  \end{subfigure}
  \begin{subfigure}{0.23\linewidth}
    \caption*{$\mathbf{\hat{U}}_{4}$}
  \end{subfigure}
  \hspace*{1cm}
  \begin{subfigure}{0.23\linewidth}
    \caption*{${\mathcal{H}}_{4}$}
  \end{subfigure}
  \begin{subfigure}{0.23\linewidth}
    \caption*{$\mathbf{U}_{4}$}
  \end{subfigure}
  \hspace*{\fill}
  \caption{
    The sub-lattices of $\mathcal{L}_{5}$ for $\hat{\cal H}_{4}$ (left)
    and ${{\cal H}}_{4}$ (right) and frequency conversion matrices,
    see ${\cal L}_{5}$ in \Cref{fig:hasse-2to5}.
    Matrices $\mathbf{\hat{{U}}}_{4}$
    and $\mathbf{{U}}_{4}$, by \Cref{def:s-node-U-matrices}, are shown
    next to the corresponding sub-lattices.
    The $p$-th element of the row vector above $\mathbf{\hat{U}}_{4}$
    is the size of the $p$-th diagonal block of $\mathbf{U}_{4}$,
    i.e., the number of orbits in the pattern $H_{4,p}$.}
  \label{fig:sub-lattices-quad-node}
\end{figure*}

\section{Intrinsic connections in frequency vectors}
\label{sec:feature-conversion}

We disclose and describe in this section rich and intrinsically
structural connections among the graphlet frequency vectors.  These
connections are presented coherently and systematically for the first
time in simple and rigorous expressions. 

\subsection{Local transforms on graphlet lattices}
Net frequency counting, by
\Cref{def:global-counts,def:local-counts-at-vertices,def:orbit-specific-counts},
is subject to the constraint that the subgraphs must be induced. It is
shown in a precursor work~\cite{floros2020c} that with $s$-node graphlet
families, $s\leq 4$,

\begin{list}{}
  {
    \setlength{\topsep}{.5em}
    \setlength{\leftmargin}{1.5em}
  \setlength{\topsep}{0em}
  \setlength{\itemsep}{-0.33em}
  \setlength{\leftmargin}{1.75em}
 }
\item[(i)] counting the gross frequencies of an $s$-node family can be
  highly flexible, direct and efficient, especially on sparse
  networks, by utilizing the net or gross frequencies of $s'$-node
  families, $s'< s$, and the sparsity structure of a network. The $s$-node
  frequencies are non-linearly related to the precedent frequencies.
\item[(ii)] the net frequencies can be obtained with ease and
efficiency from gross frequencies within the same
family, and vice versa. Specifically, the conversions are by linear transforms.
\end{list}

The two statements extend to any $s$-node graphlet families, $s>2$.
They underscore the essential properties common to various formulas in the
literature for computing graphlet frequencies.  These properties were not
declared in previous works elsewhere due in part to the lack of
conceptual and computational distinctions between net frequencies and
gross frequencies or other possible intermediate frequencies. In the
present work, we generalize the finding in (ii) to any graphlet family
of $s$-nodes, $s>2$.  This finding is important because the linear
frequency conversion within any graphlet family leaves the non-linear
transforms in graphlet frequencies to that across different
families.

In the rest of the section, we focus on intrinsic relations, local to
every vertex, in the frequencies w.r.t.  graphlets with orbit
distinction, i.e., graphlets in families $\mathcal{H}_s$,
$s\geq 1$. The orbit distinction, however, is not critical; the
frequency relations can be translated to the graphlets in
$\hat{\mathcal{H}}_{s}$ families by the sub-channel (de)composition
property (\ref{eq:sigma-frequency-split}).  We denote the quantities
in $\hat{\mathcal{H}}_{s}$ families with an overhead hat accordingly.
Recall from \Cref{subsec:graphlet-dictionary} and
\Cref{fig:hasse-2to5} the lattices associated with each type of the
graphlet families and the relationship between them.  The essence in
frequency relationships is in the lattice properties.

\subsection{Intra-family relations}
\label{sec:intra-family-relations}

\begin{figure}
  \centering
  \begin{subfigure}{0.35\linewidth}
    \includegraphics[width=\linewidth]{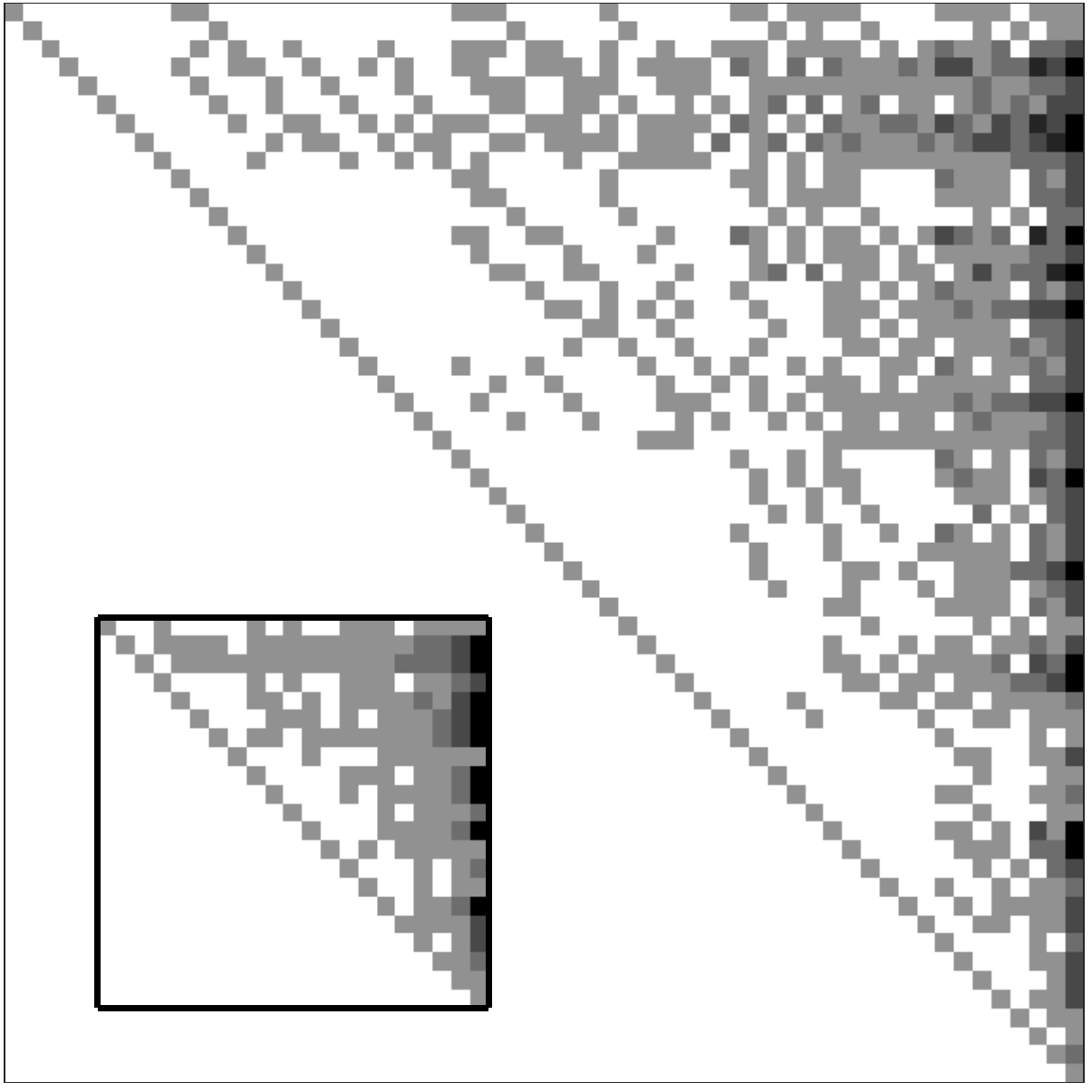}
  \end{subfigure}
  \caption{ Frequency conversion matrices ${\mathbf{U}}_{5}$
    and $\hat{\mathbf{U}}_{5}$ by \Cref{def:s-node-U-matrices}.
    Elements with larger integers are in darker pixels;
    with zero values, in white pixels. 
    Both matrices are upper triangular. Matrix
    $\hat{\mathbf{U}}_{5}$ ($21\times 21$ with $164$ nonzeros) is
    placed in the lower triangular space of $\mathbf{{U}}_{5}$
    ($58\times 58$ with $744$ nonzeros).
    The inverse of each matrix has the same sparsity pattern, by
    Theorem~\ref{thm:frequency-conversion}. }
  \label{fig:conversion-five}
\end{figure}

The net frequencies on a source graph $G$ are bounded from above by
the corresponding gross frequencies. We prove that with the graphlet
encoding system, the net and gross frequency vectors on a source graph
w.r.t. to any $s$-node family $\mathcal{H}_s$ are related by linear
transforms.

\begin{definition}{\rm (Intra-family gross-frequency matrices)} 
  \label{def:s-node-U-matrices}
  For any $s>1$, let $ { \cal H}_{s} $ be the family of $s$-node
  graphlets, each with a distinctive incidence orbit.  Define matrix
  $\mathbf{U}_s$ by the pairwise gross frequencies among the graphlets
  in the respective families as follows,
 \begin{equation}
   \label{eq:U-matrices}
   \begin{array}{l}
     \mathbf{U}_{s} ( i, j  )
     \, \triangleq \,
     g( H_i | H_j  )(v) ,
     \quad
     H_{i} , H_{j} \in  {\cal H}_s, 
   \end{array} 
 \end{equation} 
 where $v$ belongs to the designated orbit of $H_{j}$.
\end{definition}
Matrix $\mathbf{U}_s$ is intrinsically
triangular with unit diagonal values, by the subgraph inclusion in
each family and the reflexive property $g( H | H) = 1$ with any
graphlet $H$; it is therefore unimodular.  The matrix is made
upper triangular by any ascending (upward) ordering.  Its inverse
is also upper triangular and unimodular.
Matrices $\mathbf{U}_4$ and $\hat{\mathbf{U}}_4$ are shown in
\Cref{fig:sub-lattices-quad-node}.  Matrices $\mathbf{U}_5$ and
$\hat{\mathbf{U}}_5$ are depicted as gray images in
\Cref{fig:conversion-five}.

\begin{theorem}{\rm (Linear conversions of frequency vectors)} 
  \label{thm:frequency-conversion} 
  For $s>1$, let ${\cal H}_s$ and $\mathbf{U}_s$ be specified as in
  \Cref{def:s-node-U-matrices}. Then,
  \begin{list}{}
    {\setlength{\leftmargin}{0.1em} }
  \item[(a)] 
    for any source graph $G(V, E)$, at any vertex $v \in V$,
    \begin{equation}
      \label{eq:frequency-transform}
        \mathbf{U}_s \, f( {\cal H}_s | G)(v) = g( {\cal H}_s | G)(v);
    \end{equation}
\item[(b)] matrix $\mathbf{U}_s\Lambda_s$ is involutory, i.e.,
   \begin{equation}
    \label{eq:inverse-transform}
      \mathbf{U}_s^{-1} = \Lambda_s \mathbf{U} \Lambda_s,
    \end{equation}
  where $\Lambda_s = \mbox{\rm diag}(\lambda_i )$ with
  $\lambda_i = (-1)^{m(H_i)}$, $H_i \in \mathcal{H}_s$.
  \end{list}
\end{theorem}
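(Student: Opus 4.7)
The plan is to prove (a) by partitioning gross $H_i$-copies at $v$ according to the induced graphlet they sit inside, and to prove (b) by reducing the involution $(\mathbf{U}_s\Lambda_s)^2 = I$ to the classical M\"obius inversion on the Boolean lattice of edges over a fixed vertex set of size $s$.

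For (a), I would fix $v\in V(G)$ and $H_i\in\mathcal{H}_s$. Every gross $H_i$-copy at $v$ occupies an $s$-element vertex set $S\ni v$, and the induced subgraph $G[S]$ is isomorphic to a unique $H_j\in\mathcal{H}_s$ with $v$ in its designated orbit---i.e., a net $H_j$-copy at $v$. Conversely, each such net $H_j$-copy carries $g(H_i|H_j)(v)$ gross $H_i$-copies at $v$, and this count depends only on $H_j$ and the orbit of $v$ because any two representatives of the designated orbit are related by an automorphism of $H_j$ that permutes gross $H_i$-copies. Partitioning gross $H_i$-copies at $v$ by their ambient induced graphlet produces row $i$ of~(\ref{eq:frequency-transform}).

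For (b), the assertion $(\mathbf{U}_s\Lambda_s)^2 = I$ unfolds to the single combinatorial identity
\[
\sum_{H_k\in\mathcal{H}_s} (-1)^{m(H_k)}\, g(H_i|H_k)(v)\, g(H_k|H_j)(v) \;=\; (-1)^{m(H_j)}\,\delta_{ij}.
\]
To verify it, I would fix a concrete $B\cong H_j$ on a vertex set of size $s$ with $v$ at the designated orbit and evaluate the double sum $T = \sum_{A,C}(-1)^{|E(C)|-|E(A)|}$ in two orders, where the sum is over specific subgraph pairs $A\subseteq C\subseteq B$ with $A\cong H_i$ at $v$. Summing first over $C$ with $A$ fixed gives, by the M\"obius function on the Boolean lattice of subsets of $E(B)\setminus E(A)$, $\mathbf{1}[A=B]$ per $A$, so $T=\mathbf{1}[H_i=H_j]$. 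Summing first over $A$ with $C$ fixed and then grouping specific $C\subseteq B$ by orbit-specific isomorphism class $H_k$ produces $T = (-1)^{m(H_i)}\sum_k (-1)^{m(H_k)} g(H_i|H_k)(v)\,g(H_k|H_j)(v)$. Equating the two expressions yields the displayed identity: when $H_i=H_j$ only $H_k=H_i$ survives and contributes $(-1)^{m(H_i)}=(-1)^{m(H_j)}$, while for $H_i\neq H_j$ every term either cancels (if $H_i\prec H_j$) or vanishes for edge-count reasons (if $H_i\not\preceq H_j$, every term has $g(H_i|H_k)g(H_k|H_j)=0$).

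The main obstacle is the re-grouping step in (b): transferring a M\"obius cancellation that lives on specific subgraphs of $B$ into an identity among orbit-specific isomorphism classes. One must justify that the number of specific $C\subseteq B$ with $C\cong H_k$ and $v$ at the designated orbit equals $g(H_k|H_j)(v)$, and that for each such $C$ the inner count $|\{A\subseteq C : A\cong H_i \text{ at } v\}|$ equals $g(H_i|H_k)(v)$. Both assertions rest on the fact that the automorphism group of the ambient graphlet acts transitively on the orbit of $v$, so that orbit-specific counts are representative-independent. Once this correspondence between ``specific labeled'' counts and ``orbit-specific class'' counts is put on solid footing, the M\"obius identity on the Boolean edge lattice closes the argument and delivers the involutory relation $\mathbf{U}_s^{-1}=\Lambda_s\mathbf{U}_s\Lambda_s$.
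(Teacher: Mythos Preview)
Your argument is correct. Part (a) is exactly what the paper does: partition each gross $H_i$-copy at $v$ by the isomorphism type of the induced graph on its $s$ vertices.

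For (b) your route differs from the paper's. The paper does not prove the matrix identity $(\mathbf{U}_s\Lambda_s)^2=I$ directly. Instead it argues, for an \emph{arbitrary} source graph $G$, that one can recover $f(H_i|G)(v)$ from $g(H_i|G)(v)$ by ``nested removal of redundancies'' along the Hasse diagram of $\mathcal{H}_s$: each covering step contributes an alternating correction $g(H_i|H_j)\,g(H_j|G)(v)-g(H_i|H_{j'})\,g(H_{j'}|G)(v)$, and the telescoping yields $f(\mathcal{H}_s|G)(v)=\Lambda_s\mathbf{U}_s\Lambda_s\,g(\mathcal{H}_s|G)(v)$. Combined with (a), this exhibits $\Lambda_s\mathbf{U}_s\Lambda_s$ as a left inverse of $\mathbf{U}_s$ on all net-frequency vectors; specializing $G=H_j$ (so that $f(\mathcal{H}_s|H_j)(v)=e_j$) gives the matrix identity. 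Your proof instead fixes $B\cong H_j$, works entirely inside the Boolean lattice of spanning edge-subsets of $B$, and evaluates the double sum $\sum_{A\subseteq C\subseteq B}(-1)^{|E(C)|-|E(A)|}$ in two orders; the regrouping of labeled $C$'s into orbit-specific classes $H_k$ is exactly the step that converts the classical M\"obius identity into $(\mathbf{U}_s\Lambda_s)^2=I$. The paper's version is shorter and closer to the counting intuition but leaves the sign bookkeeping implicit; yours is longer but makes the M\"obius mechanism explicit and never needs to invoke (a) or a general source graph. Your caveat about the regrouping step is well placed; the justification you sketch (orbit-transitivity of $\mathrm{Aut}(H_k)$, and the fact that any $C\supseteq A$ with $A$ spanning and connected is itself connected, so every $C$ that contributes lands in some $H_k\in\mathcal{H}_s$) is precisely what is needed.
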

By the theorem, net frequencies and gross frequencies w.r.t. the same
family $\mathcal{H}_s$ are exchangeable via substitution, with any $s \geq 1$.
Part (a) of the theorem is straightforward to verify by the
definitions of gross and net frequencies.  The involutory property of
$\mathbf{U}_s\Lambda_s$ is theoretically
interesting in its own right. It is practically useful because the
reverse conversions are in the same ready and easy way as the forward
conversions in \Cref{eq:frequency-transform}.
\begin{proof} A simple proof of (b) is based on the following fact.
  Consider $f(H_i|G)(v)$, the net frequency at $v$ with respect to
  template $H_i \in {\cal H}_s$. We can obtain the net frequency by
  removing the redundancy from the gross frequency in $g( H_i|G)(v)$.
  For any $H_j \succ H_i$ covered by $H_{j'}$ in the Hasse diagram,
  the redundancy in $g(H_i|G)(v)$ due to $H_j$-isomorphic and
  $H_{j'}$-isomorphic subgraphs incident at $v$ is
  $g( H_i| H_j ) g(H_j|G)(v) - g(H_{i} | H_{j'}) g( H_{j'} | G)(v)$. By
  nested removal of the redundancies due to all super-graphs of
  $H_i$, we arrived at, 
  \[
    \begin{array}{rl} 
      \displaystyle
      f(H_i| G)(v) & = g(H_i|G)(v)
                       \displaystyle 
                     - \sum_{U(i,j)>0, j\neq i} 
                     \lambda_i \lambda_j \, g( H_i| H_j ) \, g(H_j|G)(v)
      \\
      & = \lambda_i\, U_s(i,:)\, \Lambda \, g( {\cal H}_s | G)(v),
      \qquad    \forall v\in V , 
     \end{array} 
   \]
for any $H_i \in {\cal H}_s$.    
 \end{proof}

\begin{corollary}
  \label{corr:co-factor-matrix}
  Let $\mathbf{U}$ be a matrix in \Cref{def:s-node-U-matrices}.
  Denote by $\mathbf{U}_{[ij]}$ the sub-matrix obtained from
  $\mathbf{U}$ with row $i$ and column $j$ removed. Then,
  $\mathbf{U}_{[ij]}$ is non-singular if and only if
  $\mathbf{U}(j,i) \neq 0$.
\end{corollary}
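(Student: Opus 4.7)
The plan is to derive the corollary directly from Theorem~\ref{thm:frequency-conversion}(b) by comparing two expressions for the entries of $\mathbf{U}^{-1}$: one coming from the adjugate/cofactor formula, and one coming from the involutory identity $\mathbf{U}^{-1}=\Lambda_s\mathbf{U}\Lambda_s$. The bridge between the two gives a sparsity-preservation statement that is exactly what the corollary asserts, once the standard row/column transposition in the adjugate is accounted for.

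First I would recall that $\mathbf{U}$ is unit (upper) triangular, so $\det(\mathbf{U})=1$, and the classical adjugate formula simplifies to
\begin{equation*}
  \mathbf{U}^{-1}(i,j) \;=\; (-1)^{i+j}\,\det\!\bigl(\mathbf{U}_{[j\,i]}\bigr),
\end{equation*}
where $\mathbf{U}_{[j\,i]}$ denotes the submatrix with row $j$ and column $i$ removed (note the transposition intrinsic to the adjugate). Next I would invoke Theorem~\ref{thm:frequency-conversion}(b) to write
\begin{equation*}
  \mathbf{U}^{-1}(i,j) \;=\; \lambda_i\lambda_j\,\mathbf{U}(i,j),
\end{equation*}
with $\lambda_i,\lambda_j\in\{-1,+1\}$. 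Equating the two yields $\det(\mathbf{U}_{[j\,i]}) = \pm\,\mathbf{U}(i,j)$, and therefore $\det(\mathbf{U}_{[j\,i]})\neq 0$ if and only if $\mathbf{U}(i,j)\neq 0$.

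The corollary as stated refers to $\mathbf{U}_{[ij]}$ (row $i$, column $j$ removed) and $\mathbf{U}(j,i)$, so the final step is simply the symmetric relabeling $i\leftrightarrow j$ of the identity above, which gives $\det(\mathbf{U}_{[ij]})\neq 0 \iff \mathbf{U}(j,i)\neq 0$, as desired.

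There is no real obstacle here beyond bookkeeping: the only subtlety is ensuring the row/column transposition inherent in the adjugate is tracked correctly, so that the nonzero pattern of $\mathbf{U}^{-1}$, which by Theorem~\ref{thm:frequency-conversion}(b) coincides with that of $\mathbf{U}$, is correctly translated into a statement about the minors $\det(\mathbf{U}_{[ij]})$. The triangularity of $\mathbf{U}$ (so $\det\mathbf{U}=1$) and the diagonal sign matrix $\Lambda_s$ being $\pm 1$ valued make the argument immediate, without any need to examine the combinatorial meaning of individual entries.
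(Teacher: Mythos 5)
Your argument is correct and follows the paper's proof essentially verbatim: both invoke Theorem~\ref{thm:frequency-conversion}(b) to identify the sparsity pattern of $\mathbf{U}^{-1}$ with that of $\mathbf{U}$, and both invoke the adjugate/Cramer formula (with $\det\mathbf{U}=1$) to read off $|\det(\mathbf{U}_{[ij]})| = |\mathbf{U}^{-1}(j,i)| = |\mathbf{U}(j,i)|$. You are merely a bit more explicit about the row/column transposition built into the adjugate, which the paper leaves implicit.
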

\begin{proof}
  By (b) of Theorem \ref{thm:frequency-conversion} and Cramer's rule,
  $| \, \mathbf{U}(j,i) \,  | = | \, \mathbf{U}^{-1}(j,i)\,  | = |\,  \det(\mathbf{U}_{[ij]}) \, |$.
\end{proof}

\begin{proposition}
  \label{prop:intra-family}
  Let $G(V,E)$ be a source graph. For any $s>1$, $H_j\in \mathcal{H}_s$,
  at any $v\in V(G)$, 
  \begin{equation}
    \label{eq:intra-family-filter}
    f( H_j | G)(v) = 0
    \mbox{ \rm if }
    g (H_j |G)(v) < \mathbf{U}_{s}( i, j ) 
    \mbox{\rm\ for\ some }
    H_i \in \mathcal{H}_s,
    \      H_i \prec H_j . 
  \end{equation}
  The upper bounds in (\ref{eq:intra-family-filter}) on the precedent gross
  frequencies are independent of any source graph.
\end{proposition}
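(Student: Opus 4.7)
The plan is to argue the contrapositive of (what I read as) the intended statement: assuming $f(H_j|G)(v) \geq 1$, I will show $g(H_i|G)(v) \geq \mathbf{U}_s(i,j)$ for every precedent $H_i \in \mathcal{H}_s$ with $H_i \prec H_j$. The inequality in the hypothesis should, to make a non-trivial pruning filter out of \Cref{def:s-node-U-matrices}, be a bound on the precedent gross frequency $g(H_i|G)(v)$ rather than on $g(H_j|G)(v)$; the combinatorial step below is what the statement really rests on, independently of how the indices are read.

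The core is a one-shot witness argument. Because $f(H_j|G)(v) \geq 1$, there exists an induced subgraph $H' \subseteq G$ with $H' \cong H_j$ and $v$ sitting in the image of the designated orbit of $H_j$. By \Cref{def:s-node-U-matrices}, the template $H_j$ itself contains exactly $g(H_i|H_j)(v) = \mathbf{U}_s(i,j)$ distinct (gross) $H_i$-subgraphs incident at its designated orbit vertex. Transporting this count across the isomorphism $H' \cong H_j$ produces $\mathbf{U}_s(i,j)$ pairwise distinct $H_i$-subgraphs of $H'$, each incident at $v$ with the orbit label required by \Cref{def:orbit-specific-counts}. Since $H' \subseteq G$, every one of these is also a (gross) orbit-incident $H_i$-subgraph of $G$ at $v$, and distinct subgraphs of $H'$ remain distinct as subgraphs of $G$. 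Hence $g(H_i|G)(v) \geq \mathbf{U}_s(i,j)$, and taking contrapositives gives the proposition.

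The source-graph independence is immediate from this setup: $\mathbf{U}_s(i,j) = g(H_i|H_j)(v)$ is purely a function of the graphlet pair $(H_i,H_j)$ and the orbit marker inside $H_j$, with no reference to $G$. The only step I expect to require care — and thus the only place where a small obstacle can hide — is the orbit bookkeeping across the isomorphism $H' \cong H_j$: one must check that the $\mathbf{U}_s(i,j)$ subgraphs enumerated inside $H'$ carry $v$ into the designated orbit of $H_i$ exactly as \Cref{def:orbit-specific-counts} and \Cref{def:s-node-U-matrices} demand. Because automorphisms of $H_j$ fix each orbit setwise, the count $g(H_i|H_j)(v)$ is well-defined independently of which vertex in the designated orbit is chosen as $v$, and the bookkeeping closes without further complications.
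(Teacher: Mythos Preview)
Your reading of the statement is correct: as printed, the hypothesis bounds $g(H_j|G)(v)$, and that version is false (take $H_j=K_3=H_{3,2,1}$ and $H_i=H_{3,1,1}$; a vertex lying in exactly one triangle has $g(K_3|G)(v)=1<2=\mathbf{U}_3(i,j)$ yet $f(K_3|G)(v)=1$). The closing sentence about ``precedent gross frequencies'' and the parallel structure of \Cref{prop:inter-family} confirm that the intended hypothesis is $g(H_i|G)(v)<\mathbf{U}_s(i,j)$, exactly as you inferred.

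The paper does not write out a proof; the proposition is meant to be an immediate consequence of \Cref{thm:frequency-conversion}(a). Reading row $i$ of \eqref{eq:frequency-transform} gives
\[
g(H_i|G)(v)\;=\;\sum_{k}\mathbf{U}_s(i,k)\,f(H_k|G)(v)\;\geq\;\mathbf{U}_s(i,j)\,f(H_j|G)(v),
\]
since every summand is non-negative; hence $f(H_j|G)(v)\geq 1$ forces $g(H_i|G)(v)\geq\mathbf{U}_s(i,j)$, and the contrapositive is the proposition.

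Your witness argument is correct and is precisely the combinatorial content of the single $k=j$ term in that sum: one induced copy of $H_j$ at $v$ already contributes $\mathbf{U}_s(i,j)$ gross $H_i$-copies at $v$. The orbit bookkeeping you flagged is handled by the orbit-specific \Cref{def:orbit-specific-counts,def:s-node-U-matrices}, so nothing is left hanging. The only difference from the paper's implicit one-liner is that you rederive the needed inequality from scratch rather than quoting \eqref{eq:frequency-transform}; your route is self-contained, the paper's is shorter once \Cref{thm:frequency-conversion} is in hand.
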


\subsection{Inter-family relations}
\label{sec:inter-family-relations}

\begin{table}[!t]
  \centering
  \caption{Inter-family net-frequency matrix
    $[\mathbf{W}_{2,4}; \mathbf{W}_{3,4}]$, by
    Definition~\ref{eq:UW-matrix-blocks},
    for filtering by algorithm \gsurf{}. }
  \label{tab:inter-family}
    \begin{tabular}{lccccccccccc}
      \toprule
      & $H_{4,1,1}$
      & $H_{4,1,2}$
      & $H_{4,2,1}$
      & $H_{4,2,2}$
      & $H_{4,3,1}$
      & $H_{4,3,2}$
      & $H_{4,3,3}$
      & $H_{4,4,1}$
      & $H_{4,5,1}$
      & $H_{4,5,2}$
      & $H_{4,6,1}$
      \\
      \midrule
      $H_{2,1,1}$ & 1 & 3 & 1 & 2 & 1 & 2 & 3 & 2 & 2 & 3 & 3 \\
      \hdashline
      $H_{3,1,1}$ & 2 & 0 & 1 & 1 & 2 & 1 & 0 & 2 & 2 & 0 & 0 \\
      $H_{3,1,2}$ & 0 & 3 & 0 & 1 & 0 & 0 & 2 & 1 & 0 & 1 & 0 \\
      $H_{3,2,1}$ & 0 & 0 & 0 & 0 & 0 & 1 & 1 & 0 & 1 & 2 & 3 \\
      \bottomrule
    \end{tabular}%
\end{table}

The relations in frequencies on a source graph across different
families are fundamentally non-linear.

\begin{example} For any graph $G(V,E)$, at any $v\in V$,
  \label{example:nonlinear-relation}
\[ 
    \label{eqn:non-linear-upward} 
      g(H_{4,2,2} | G)(v)  \! = g(H_{2,1,1} | G)(v) \, g(H_{3,1,1} | G)(v)
            -g(H_{3,1,1} | G)(v) - 2\, g(H_{3,2,1}|G)(v). 
\] 
\end{example}

We introduce certain useful inference rules by inequalities, regardless how the
frequencies with family $\mathcal{H}_s$ are computed from the
precedent families.

\begin{definition}{\rm (Inter-family pairwise-frequency matrices)}
  \label{def:inter-family-gross-matrices}
  For $s, s' >1$, let ${\cal H}_{s}$ and ${\cal H}_{s'} $ be two
  families of graphlets with distinctive incidence orbits.  Define
  matrix $\mathbf{U}_{s,s'}$ and $\mathbf{W}_{s,s'}$ by the pairwise
  gross and net graphlet frequencies, respectively, across the two
  families,
  \begin{equation}
    \label{eq:UW-matrix-blocks}
      \mathbf{U}_{s,s'} (\, i, j \, )
      \, \triangleq \,
      g\, ( H_{i}  | H_{j} )( v ) ,
      \qquad
      \mathbf{W}_{s,s'} (\, i, j \, )
      \, \triangleq \,
      f\, ( H_{i}  | H_{j} )( v ) ,
      \qquad
      H_{i} \!\in\! \mathcal{H}_s,\
      H_{j} \!\in\! \mathcal{H}_{s'},
  \end{equation}
  where $v$ is in the designated orbit of $H_{j}$.
\end{definition}
Clearly, when $s'=s$, $\mathbf{U}_{s,s'} = \mathbf{U_s}$ by (\ref{eq:U-matrices}). 

\begin{corollary}
  \label{coro::inter-graphlet-frequency-matrices}
  For $t>1$, let 
    $ \widetilde{\mathbf{U}} _t \triangleq \big[\mathbf{U}_{s,s'}\big]_{s,s'=1}^{t} $ 
  and $ \widetilde{\mathbf{W}} _t \triangleq \big[\mathbf{W}_{s,s'}\big]_{s,s'=1}^{t} $. 
  Then, $\widetilde{\mathbf{U}}_t$ and $ \widetilde{\mathbf{W}}_t$ 
  are block upper triangular and related as follows, 
  \begin{equation}
    \label{eq:UW-coversion}
    \widetilde{\mathbf{W}}_{t} =
    \mbox{\rm diag}( \mathbf{U}_1^{-1}, \cdots, \mathbf{U}_t^{-1} )
    \, 
    \widetilde{\mathbf{U}}_{t} .
  \end{equation}
  \end{corollary}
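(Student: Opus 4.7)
The plan is to prove the claim in two stages: first establish the block upper triangular structure of both $\widetilde{\mathbf{U}}_t$ and $\widetilde{\mathbf{W}}_t$, then derive the conversion formula by applying Theorem~\ref{thm:frequency-conversion}(a) column by column, treating each graphlet $H_j$ as its own source graph. The substantive content is already delivered by Theorem~\ref{thm:frequency-conversion}; what remains is an indexing argument to package it as a single matrix equation.

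For the triangularity, suppose $s > s'$. Then every $H_j \in \mathcal{H}_{s'}$ has exactly $s'$ nodes, strictly fewer than the node count of any $H_i \in \mathcal{H}_s$. Hence no such $H_i$ can embed as a subgraph of $H_j$, so $g(H_i|H_j)(v) = 0$ and $f(H_i|H_j)(v) = 0$ for every vertex $v$, making the entire blocks $\mathbf{U}_{s,s'}$ and $\mathbf{W}_{s,s'}$ vanish. Thus both $\widetilde{\mathbf{U}}_t$ and $\widetilde{\mathbf{W}}_t$ are block upper triangular with respect to the partition by family size. Along the block diagonal, the blocks $\mathbf{U}_{s,s}$ and $\mathbf{W}_{s,s}$ coincide with the intra-family matrix $\mathbf{U}_s$ and its net counterpart by \Cref{def:s-node-U-matrices}, which will matter for the second step.

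For the conversion identity, I fix a column of $\widetilde{\mathbf{U}}_t$ indexed by some $H_j \in \mathcal{H}_{s'}$ with designated incidence vertex $v$, and apply Theorem~\ref{thm:frequency-conversion}(a) with the source graph chosen to be $H_j$ itself, once for each $s = 1, \ldots, t$. This yields $\mathbf{U}_s\, f(\mathcal{H}_s | H_j)(v) = g(\mathcal{H}_s | H_j)(v)$. By \Cref{def:inter-family-gross-matrices}, the right-hand vector is exactly the $s$-th row-block of the $j$-th column of $\widetilde{\mathbf{U}}_t$, while the left-hand vector is $\mathbf{U}_s$ times the corresponding block of $\widetilde{\mathbf{W}}_t$. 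Assembling these block identities over all $s$ and all columns gives $\widetilde{\mathbf{U}}_t = \mathrm{diag}(\mathbf{U}_1, \ldots, \mathbf{U}_t)\, \widetilde{\mathbf{W}}_t$, and left-multiplying by the block-diagonal inverse, which exists by Theorem~\ref{thm:frequency-conversion}(b), delivers the stated formula. The main obstacle is merely bookkeeping: I must verify that the graphlet ordering used to index the vectors $f(\mathcal{H}_s | H_j)(v)$ and $g(\mathcal{H}_s | H_j)(v)$ in Theorem~\ref{thm:frequency-conversion} agrees with the row ordering adopted in \Cref{def:inter-family-gross-matrices} for the blocks $\mathbf{U}_{s,s'}$ and $\mathbf{W}_{s,s'}$. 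Since both are taken to be the {\sc seira} total order on each $\mathcal{H}_s$, the alignment is automatic and no new counting argument is required.
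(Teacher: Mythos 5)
Your proof is correct and takes the route the paper itself has in mind: the paper merely remarks that \Cref{coro::inter-graphlet-frequency-matrices} is ``a direct consequence of Theorem~\ref{thm:frequency-conversion},'' and your column-by-column application of Theorem~\ref{thm:frequency-conversion}(a) with each graphlet $H_j$ playing the role of the source graph $G$, combined with the node-count argument for block upper triangularity, is exactly the natural unpacking of that remark. The one non-obvious step --- taking $G=H_j$ so that the vectors $g(\mathcal{H}_s|H_j)(v)$ and $f(\mathcal{H}_s|H_j)(v)$ become the $j$-th column blocks of $\widetilde{\mathbf{U}}_t$ and $\widetilde{\mathbf{W}}_t$ --- is precisely the content the paper leaves implicit, and you identify it cleanly.
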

  \Cref{tab:inter-family} shows a submatrix of
  $\widetilde{\mathbf{W}}_{4}$.  Independent of source graphs,
  matrices $\widetilde{\mathbf{U}}_t$ and $\widetilde{\mathbf{W}}_t$,
  for any fixed $t$, can be precomputed once and for all.  The
  relation between them is a direct consequence of
  Theorem~\ref{thm:frequency-conversion}.

\begin{proposition}
  \label{prop:inter-family}
  Let $G(V,E)$ be a source graph. For any $s>1$, $H_j \in \mathcal{H}_s$,
  at any $v\in V(G) $,
  \begin{equation}
    \label{eq:inter-family-filter}
    f( H_j | G) (v) = 0
    \mbox{\rm  \ if \ }
    f( H_i |G)(v) < \mathbf{W}_{r,s}( i, j )
    \mbox{\rm \  for\ some\ }  H_i \in \mathcal{H}_r, \ \  r < s.
  \end{equation}
  The upper bounds in (\ref{eq:inter-family-filter}) on the precedent net
  frequencies are independent of any source graph.
\end{proposition}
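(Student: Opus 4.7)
The plan is to establish the inequality $f(H_j|G)(v) \geq 1 \Rightarrow f(H_i|G)(v) \geq \mathbf{W}_{r,s}(i,j)$ for every $H_i \in \mathcal{H}_r$ with $r<s$, and then take the contrapositive. In other words, the statement is a monotonicity assertion: the presence of a single induced $H_j$-copy incident at $v$ forces a fixed, source-independent quota of induced $H_i$-copies incident at $v$, that quota being precisely the entry $\mathbf{W}_{r,s}(i,j) = f(H_i|H_j)(v)$ of the pairwise net-frequency matrix defined in \Cref{def:inter-family-gross-matrices}.

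The key step is a ``heredity'' observation: if $H' \in \Gamma_f(H_j|G)(v)$, so that $H'$ is an induced subgraph of $G$ isomorphic to $H_j$ with $v$ in the orbit $\sigma_j'\cong \sigma_j$ corresponding to the designated orbit of $H_j$, then any induced subgraph of $H'$ is, by transitivity of the induced-subgraph relation, also induced in $G$. In particular, every element of $\Gamma_f(H_i|H')(v)$ lifts to an element of $\Gamma_f(H_i|G)(v)$, and distinct copies inside $H'$ give distinct copies inside $G$ because the vertex sets coincide with those of the copies inside $H'$. Since $|\Gamma_f(H_i|H')(v)| = f(H_i|H_j)(v) = \mathbf{W}_{r,s}(i,j)$ by the isomorphism $H' \cong H_j$ with matching orbits, we obtain the lower bound $f(H_i|G)(v) \geq \mathbf{W}_{r,s}(i,j)$ whenever at least one such $H'$ exists.

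From here the proposition is immediate by contraposition: if the hypothesis $f(H_i|G)(v) < \mathbf{W}_{r,s}(i,j)$ holds for even one $H_i\in \mathcal{H}_r$, $r<s$, then no induced $H_j$-copy incident at $v$ can exist, i.e.\ $f(H_j|G)(v) = 0$. The source-independence of the bound is automatic, since $\mathbf{W}_{r,s}(i,j)$ is defined purely in terms of $H_i$ and $H_j$ through \Cref{def:inter-family-gross-matrices}.

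The only subtle point, and the place where care is required in the write-up, is the orbit bookkeeping under the isomorphism $H' \cong H_j$: one must verify that when the isomorphism carries $v$ into the designated orbit of $H_j$, the induced $H_i$-subcounts at $v$ in $H'$ agree with those in $H_j$ at its designated orbit representative. This follows because orbit membership is an isomorphism invariant of the template graph, so the map $\Gamma_f(H_i|H_j)(v_j) \to \Gamma_f(H_i|H')(v)$ induced by the isomorphism is a bijection preserving the orbit of $H_i$ at which $v$ sits. No additional machinery beyond the definitions in \Cref{sec:formal-description} is needed; in particular, \Cref{thm:frequency-conversion} is not invoked, and the argument is parallel in spirit to that of \Cref{prop:intra-family} but carried across families rather than within one.
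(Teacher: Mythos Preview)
The paper does not supply a proof of this proposition; it is stated and immediately used as a filtering rule in \Cref{sec:graphlet-filtering}, with the justification left implicit. Your argument is exactly the natural one the authors presumably had in mind: transitivity of the induced-subgraph relation ensures that any induced $H_i$-copy inside an induced $H_j$-copy $H'\subseteq G$ is itself induced in $G$, so a single $H'\in\Gamma_f(H_j|G)(v)$ already contributes $\mathbf{W}_{r,s}(i,j)$ distinct elements to $\Gamma_f(H_i|G)(v)$, and the contrapositive yields the claim. The orbit bookkeeping you flag is handled correctly, and the source-independence of the bound is immediate from \Cref{def:inter-family-gross-matrices}. Your proof is correct and complete.
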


Remarks. Myriad formulas in the graphlet literature had been used by
haphazard selections. We are able to characterize, categorize, relate
and interpret them in terms of the relationships among the encoding
elements as well as the relationships among graphlet freqencies on any
source graph.

\section{Algorithm \gsurf{}}
\label{sec:g-surf}

We present a novel algorithm, \gsurf{}, for systematic and efficient
generation of frequency maps on any source graph $G$ with 
$t$ graphlet families $\mathcal{H}_s$, $1<s\!\leq\! t$.
We first describe the baseline algorithm.  We then elaborate on
acceleration methods and show a significant cost reduction in a case
study with a real-world network. Additionally, we comment on
time and space complexities.

\subsection{The baseline algorithm}

Algorithm \gsurf{} takes at input:
\begin{inparaenum}[(i)]
  \item a source graph $G(V,E)$ with adjacency matrix $A$, and
  \item an integer $t>2$ specifying the graphlet families
    ${\cal H}_s$, $s\leq t$.
\end{inparaenum}
The algorithm renders at output the net frequency maps
$\{f({{\cal H}}_{1:t} | G)(v), v\in V(G)\}$.
\gsurf{} has the following basic steps.
Initially, the first net frequency maps are set over $V$,
\[
  f(K_{1}|G)(V) = 1, \quad f(K_{2} |G)(V) = d(V) = A \cdot e. 
  \] 
  Then, the algorithm iterates sequentially over the graphlet families
  $\mathcal{H}_s$, $s = 3, \dots, t$.
  Step $s$ has two substeps at every vertex $v\in V$: 
\[ \begin{array}{l} 
     \mbox{\tt Compute} \mbox{ gross frequencies}
     \\
     g({\cal H}_{s} | G)(v)  = \mbox{\tt upRec}( f({\cal H}_{1:(s-1)}|G) (v), A ) , 
     \\[0.7em]
     \mbox{\tt Convert} \mbox{ to net frequencies}
     \\ 
     f( {\cal H}_s | G)(v) = \mathbf{U}^{-1}_s \, g( {\cal H}_s | G)(v) . 
     \end{array} 
\]
At the first substep, the upward recursion function {\tt upRec} makes
use of the precedent frequencies and adjacency matrix $A$.
The upward recursion functions are typically non-linear.  There exist
various approaches for constructing the recursion
function~\cite{marcus2012,hocevar2014b,melckenbeeck2018,floros2020c},
see also Example~\ref{example:nonlinear-relation}.
The construction approaches also vary in how to exploit the structures
of $A$, including the sparsity.
For the second substep, the transform matrices $\mathbf{U}_{s}$,
$s \leq t$, are pre-computed once and for all.
At iteration step $s$, the frequency conversion is linear with $n(G)$,
the constant prefactor is proportional to $\mbox{nnz}(\mathbf{U}_{s})$.
The main cost lies in the upward recursion, which is
dominated by the computation of the clique frequencies
$g( K_s|G)(v) = f( K_{s} |G )(v) $.  We introduce next our cost reduction strategies.

\begin{figure*}[!t]
  \centering
  \begin{subfigure}{0.28\linewidth}
    \centering
    \includegraphics[width=\linewidth]{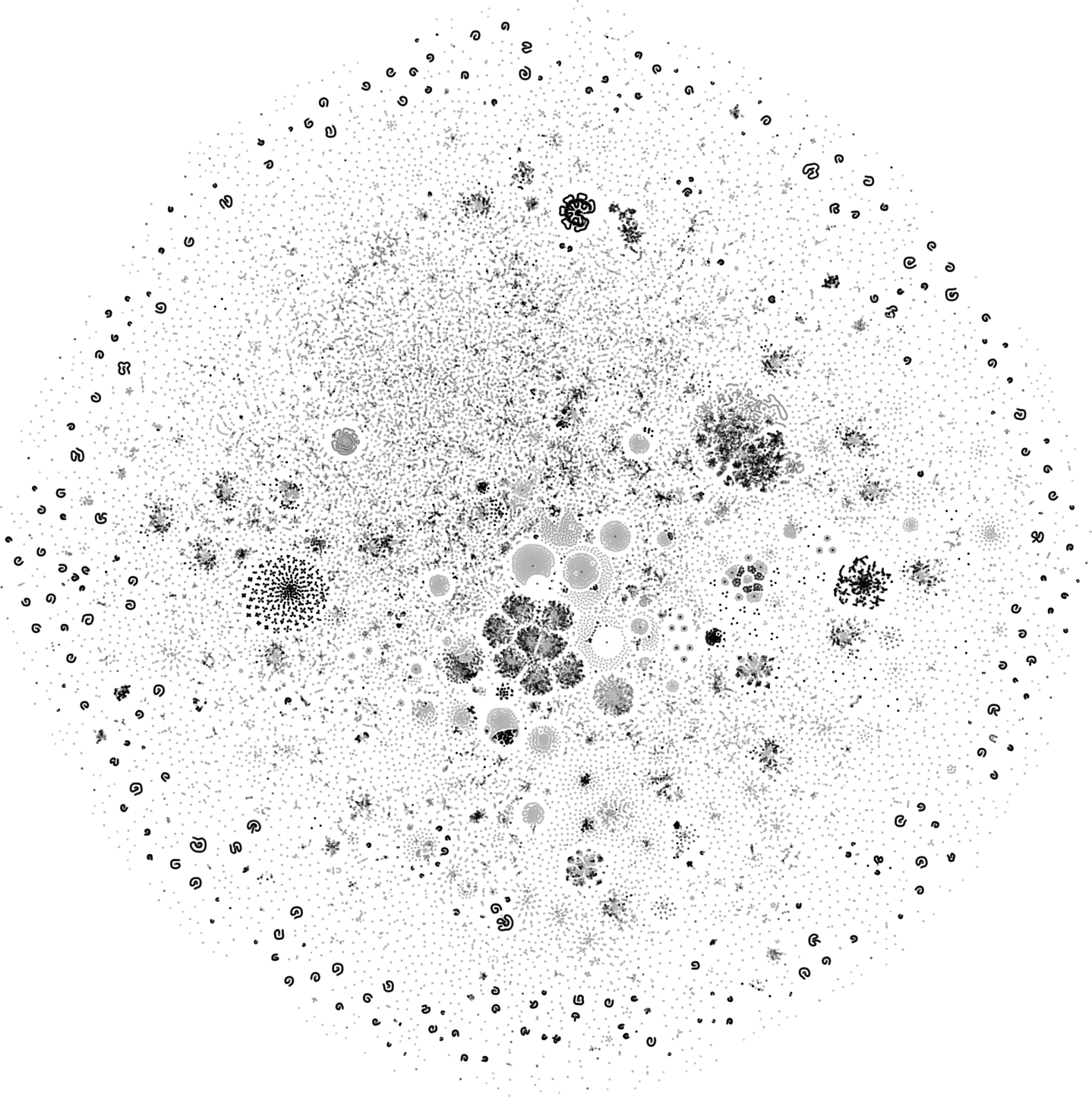}
    \caption{}
  \end{subfigure}
  \begin{subfigure}{0.28\linewidth}
    \centering
    \includegraphics[width=\linewidth]{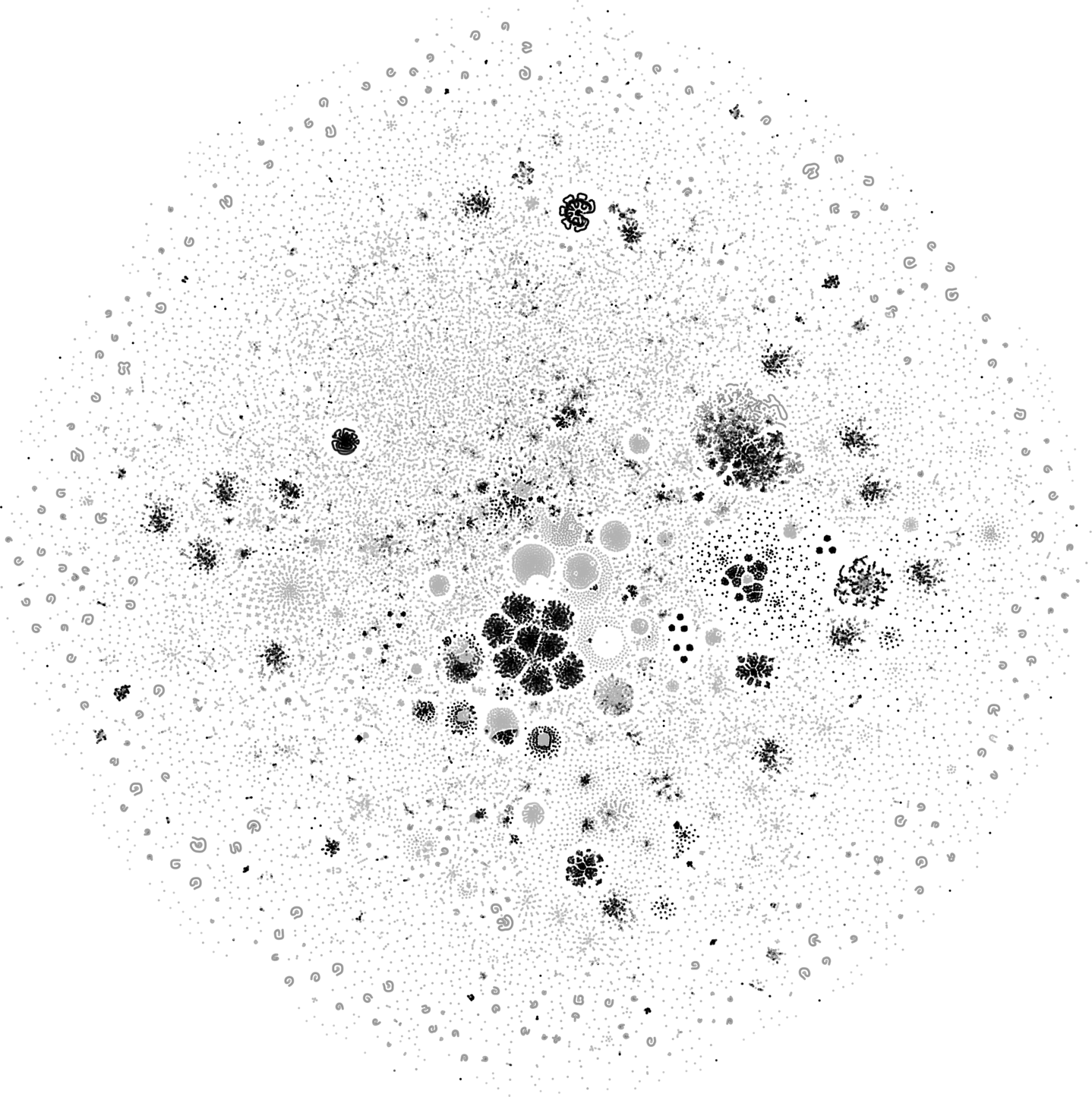}
    \caption{}
  \end{subfigure}
  \begin{subfigure}{0.28\linewidth}
    \centering
    \includegraphics[width=\linewidth]{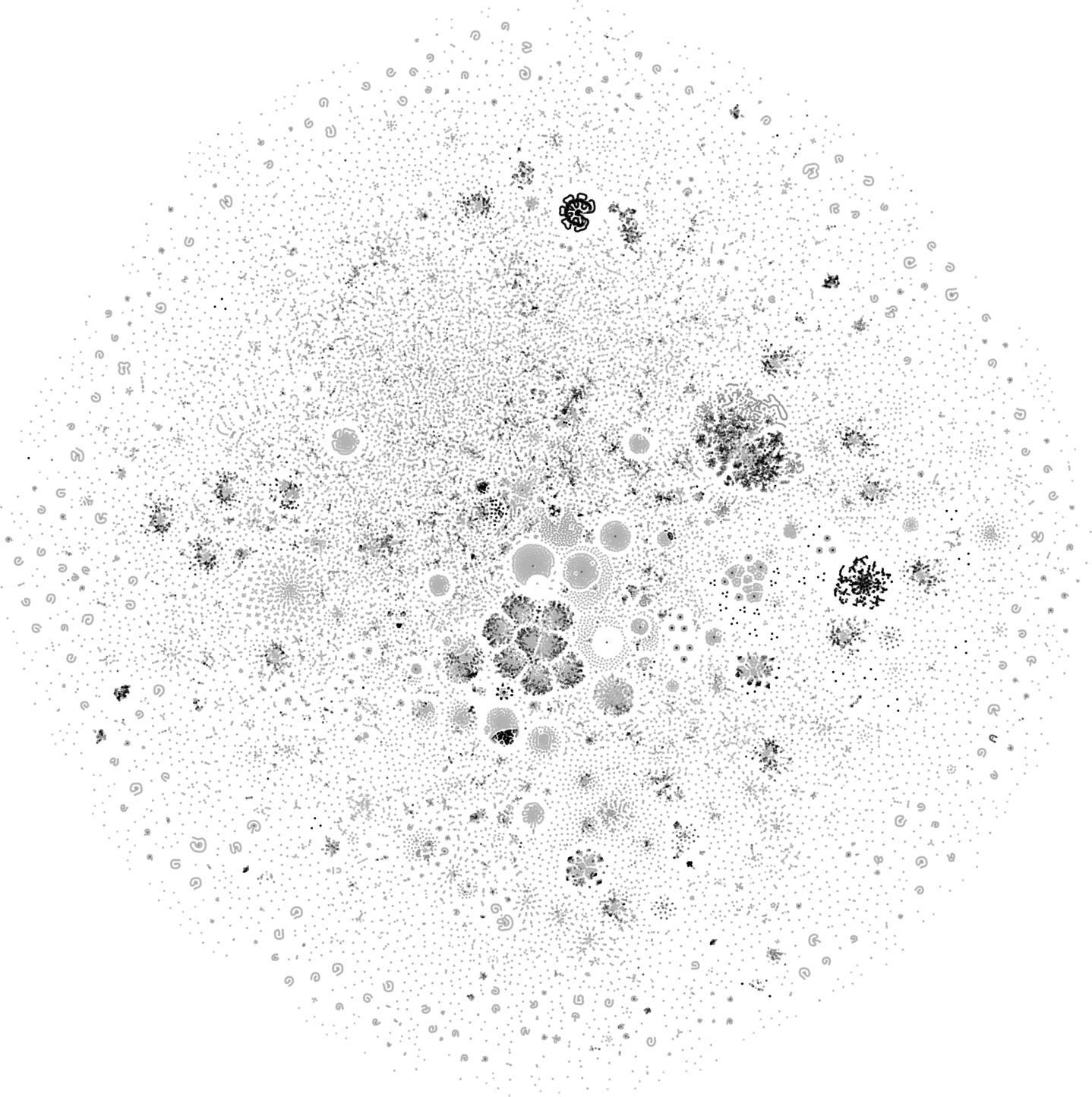}
    \caption{}
  \end{subfigure}
  \\[0.5em]
  \hspace*{\fill}
  \begin{subfigure}{.98\linewidth}
    \centering
    \includegraphics[height=0.12\textheight]{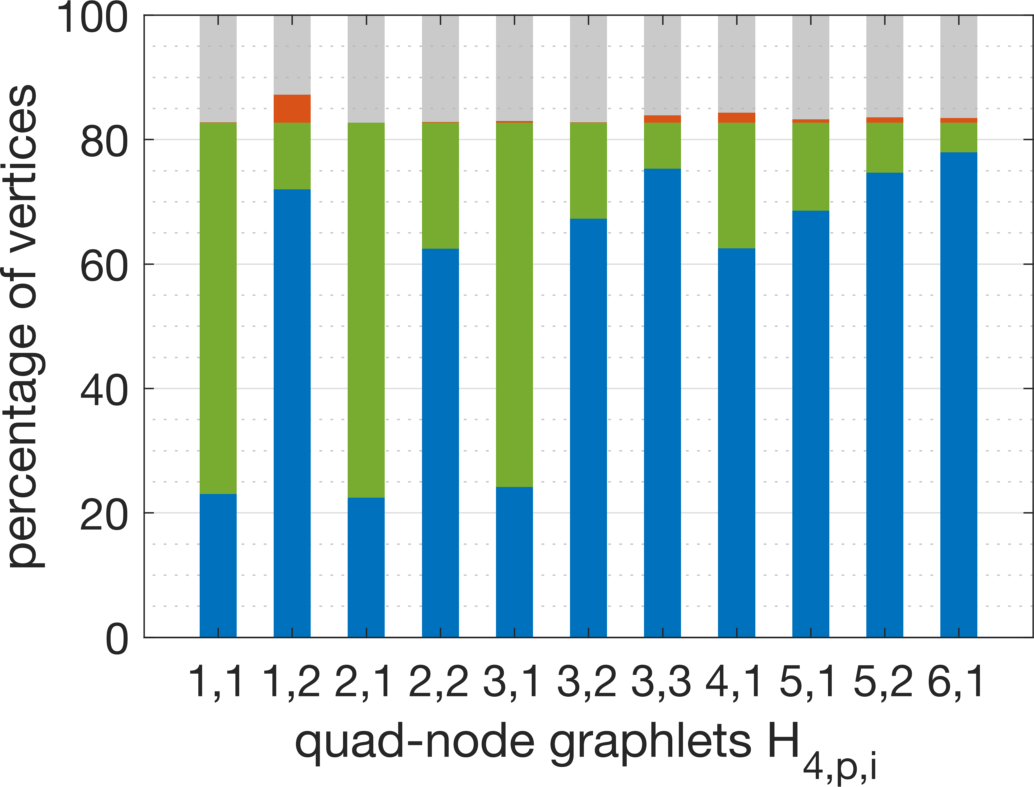}
    \includegraphics[height=0.12\textheight]{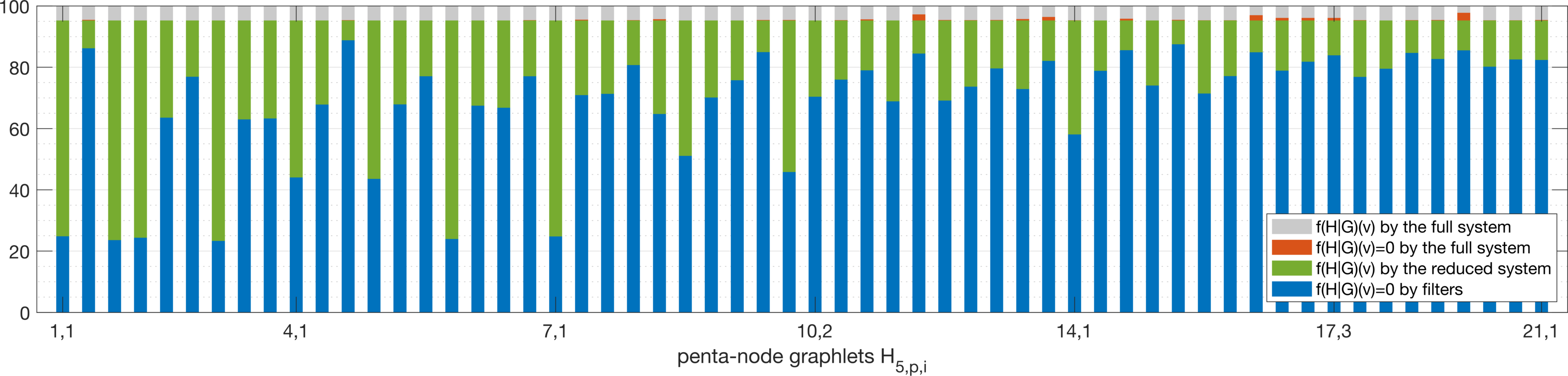}
  \end{subfigure}
  \hspace*{\fill}
  \\[0em]
  \hspace*{1.5em}
  \begin{subfigure}{.2\linewidth}
    \centering
    \caption{}
  \end{subfigure}
  \hspace*{0em}
  \begin{subfigure}{.6\linewidth}
    \centering
    \caption{}
  \end{subfigure}
  \caption{Significant reduction in computation cost by algorithm
    \gsurf{} on network \texttt{NotreDame\_www}. The network, described in
    \Cref{sec:case-study}, has \num{325 729} HTML document nodes and
    \num{757 365} links. (\textbf{Top}): Frequency maps, overlaid on a 2-D spatial embedding of the network~\cite{pitsianis2019}, with respect to
    graphlets
    \textbf{(a)} $H_{4,1,2}$, %
    \textbf{(b)} $H_{4,6,1}$ and %
    \textbf{(c)} $H_{5,4,3}$, respectively.
    Vertices with zero frequencies are shown in black; non-zeros, in gray.
    (\textbf{Bottom}): Reduction bar charts with (\textbf{d}) quad-node graphlets and (\textbf{e}) penta-node graphlets. The percentages of vertices
    (blue) with $f(H|G)(v) = 0$ identified by internal filters,
    (red) with $f(H|G)(v) = 0$ identified by full systems,
    (green) processed by reduced systems, and
    (gray) processed by full systems.
    The smaller the grey area is, the more reduction in computation cost.
    \textbf{Observation}: Reduced systems are used at about
    $\num{84}\%$ of the vertices with quad-node graphlets;
    $\num{95}\%$, with penta-node graphlets; with larger graphlets,
    more reduced systems and more cost reduction.  }
  \label{fig:results-real-world}
\end{figure*}

\subsection{Reduced systems}
\label{sec:reduced-systems}

At step $s$ we can obtain the clique frequency
$f( K_s |G)(v) = g( K_s|G)(v) $ at vertex $v$ economically if for some
$H_j \in \mathcal{H}_s\!\setminus\!K_s$ the frequency $f( H_j |G)(v)$
is known prior to the gross-to-net conversion.
For instance, if graph $G$ is found path-4 free by a linear-complexity
algorithm~\cite{corneil1985},
then $f( H_{4,2} |G)(v) =0$ at all vertices and $f(K_4|G)(v)$ can be
obtained economically by the following approach.
Here, ${\cal H}_s\!\setminus\! H$ denotes the set ${\cal H}_s$ with
$H$ removed.  With the additional information of a zero net
frequency with $H_j$ at $v$, we can reduce the full conversion system
(\ref{eq:frequency-transform}) at $v$ to the following one by variable
substitution,  with $H_{j} \neq K_{s}$, 
\begin{equation}
  \label{eq:reduced-systems}
  \mathbf{U}_{s[sj]} \, f( {\cal H}_s\!\setminus\! H_{j} | G)(v)
  = g( {\cal H}_s\!\setminus\! K_{s} | G)(v).
\end{equation}
Here, $\mathbf{U}_{s[ij]}$ is matrix $\mathbf{U}_{s}$ with row $i$ and
column $j$ removed, and it is non-singular if
$\mathbf{U}_s(j,i) \neq 0$, by \Cref{corr:co-factor-matrix}.
In general, under the conditions that $\mathbf{U}_s(j,i) \neq 0$ and
$i\neq j$, the reduced system with matrix $\mathbf{U}_{s[ij]}$ can be
used to infer $f(H_i|G)(v)$, with the knowledge of $f( H_j|G)(v)$,
$H_i, H_j\in \mathcal{H}_s$, without calculating the gross frequency
$g(H_i|G)(v)$.

\subsection{Frequency filtering}
\label{sec:graphlet-filtering}

The system reduction is not limited to external sources of frequency
information. The reduced systems are not necessarily uniform across
all vertices. We introduce how to infer zero-net frequencies
internally for system reduction and make use of precedent frequency
information local to each vertex.

At step $s$ of algorithm \gsurf{}, $M_{s}$ is used as a binary mask
over the vertex set $V$. The mask is initially set to zero at all
vertices. The algorithm sets $M_{s}(v) = 1$ at vertex $v$ when it is
recognized that $f(H|G)(v) = 0$ for some $H \in \mathcal{H}_{s}$.
The system (\ref{eq:frequency-transform}) can be reduced at any vertex
with $M_{s}(v)=1$.
The zero-frequency recognition takes place before and during the
computation of the gross frequencies.  When the net frequencies with
precedent graphlet families, $f({\cal H}_{1:(s-1)}|G)(v)$, become
available, matrices $\mathbf{W}_{s',s}$, $s' < s$, are used to detect
zero frequencies, based on Proposition~\ref{prop:inter-family}.
During the computation of gross frequencies, as more information
becomes available, the columns of $\mathbf{U}_s$ are used to detect
zero frequencies from precedent graphlets in the same family, based on
Proposition~\ref{prop:intra-family}.
The filtering is self-contained and adaptive to local information.

\subsection{Case study \& complexities}
\label{sec:case-study}

We present a case study with the real-world network
\texttt{NotreDame\_www}.\footnote{The network data is available at
  \url{https://sparse.tamu.edu/Barabasi/NotreDame_www}}
The network is the first shown to follow the celebrated
Bar{\'a}basi-Albert model~\cite{albert1999a}.
We treat it as undirected, without self-loops.
The resultant network has \num{325 729} HTML document nodes and \num{757
  365} links, with average degree $4.65$.
We generate quad-node and penta-node frequency maps using \gsurf{}.
Three of the maps are shown in \Cref{fig:results-real-world}.
There is a significant reduction in computation cost. The reduction is
measured by the percentage of the reduced systems over the entire
vertex set.  About $84\% $ of the local systems are reduced for
generating the frequency maps with quad-node graphlets; about $95\%$,
with the penta-node graphlets. The reduction relies entirely on
internal filtering.

We comment on time complexity and space complexity.  When the
graphlet size is bounded by $t>1$ and when the space complexity 
is of $O( n^{2})$, \gsurf{} is of time complexity $O(n^{3} )$, under
additional assumptions as follows. 
Matrix multiplications are used in the computation of gross
frequencies. Matrix powers $A^k$, $1<k\leq t$, may be computed by the
square-doubling technique. Asymptotic techniques for matrix
multiplications are not used. By this reasoning, the complexity is of
the same order as generating the triangle map alone.  The hardness level in
theoretical worst-case complexity is cubic in time and quadratic in
space. In the real world, large networks tend to be sparse.  The
sparsity shall be exploited to lower both time and space complexities
on average over all sparse matrices, which is shown feasible
in~\cite{floros2020c}.

\section{Discussion}
\label{sec:conclusion}

\subsection{Relations to previous works}
\label{sec:previous-work}
We relate our work to previous works on structures and counts of small
subgraphs in a graph or network.  To this end, we give a brief
overview of the previous works through the lens of our unifying
analysis as introduced in the preceding sections.  The overview is
intended to clarify key distinctions, connections and gaps among the
most relevant and notable works, certain lingering faults and limiting
factors, and the advance we have made.

Relevant previous works may be categorized first by problem types or
objectives and then by solution methodologies.  We name below a few
problems and describe typical or notable solution methods for each.
In a solution method to a particular problem, a connection or
translation to another problem may be utilized.

\noindent{\ (I)} %
Determine whether or not a graph $G$ is free of certain forbidden
connected subgraphs.  Such classical graph recognition problems ask
whether or not the total count of forbidden subgraphs is zero.
Familiar examples include triangle-free, path-free, cycle-free, or
diamond-free graphs~\cite{faudree1997,kloks2000,corneil1985}. A line
graph is free of $9$ small subgraph patterns with no more than $6$
nodes each~\cite{beineke1970,harary1960}.  Specifically, the $9$
patterns are graphlets $H_{4,1}$, $H_{5,17}$, $H_{5,20}$, $H_{6,24}$,
$H_{6,58}$, $H_{6,64}$, $H_{6,71}$, $H_{6,91}$ and $H_{6,99}$ by our
graph index system.  Solution methods for such recognition or
detection problems are typically pattern-specific, leveraging the fixed
pattern topology and adapting to the local and global connectivity
structure in a source graph. Such methods are driven toward optimal
time complexity. Remarkably, the algorithm for line-graph recognition
and root graph reconstruction by Lehot \cite{lehot1974} is of linear
complexity with $m(G)$.

\noindent{\ (II)} %
Compute the total net counts of small connected subgraphs.  Among the
notable solution methods are the work by Alon, Yuster and Zwick~\cite{alon1997},
the work by Kloks, Kratsch and M{\" u}ller~\cite{kloks2000}
and more recent work by Vassilevska Williams and Williams~\cite{williams2018a}.
These methods are common in their use of
fast matrix multiplications for global counting of prescribed
small subgraphs.  The connections between the template structures are
used for the total counts. They are appealing
with asymptotically low complexity. However, they remain impractical
due to the galactically large prefactor constants.

\noindent{\ (III)} %
Find the locations and local counts (i.e., listing) of small cliques in
a graph.  All nodes in a clique are symmetrical, i.e., any clique has
only one orbit. The net count and the gross count of a clique at each
vertex are equal.  An influential work on clique-listing is by Chiba
 and Nishizeki~\cite{chiba1985}. This work also makes a critical link between the subgraph
counting complexity and the graph arboricity, the latter is a measure
of graph connectivity.  Listing cliques up to a fixed size takes
polynomial time. The method by Chiba and Nishizeki makes use of sub-clique
listings.

\noindent{\ (IV)} %
Find the net frequency distributions of graphlets. Graphlets include
but are not limited to small cliques. A small graph template with more
than one orbits can be split into multiple graphlets. The original
concept and use of graphlets are by Pr\v{z}ulj, Corneil and
Jurisica~\cite{przulj2004}. To obtain the graphlet distributions, the
graphlet frequency maps are actually obtained first. Unfortunately,
the collocation relationships among the frequency maps were discarded
in separate extractions for the sequences and distributions, which were
examined subsequently for cross-correlations or associations.

Myriad formulas and procedures exist for computing the net frequencies
of
graphlets~\cite{hocevar2014b,marcus2012,floros2020c,melckenbeeck2016,ortmann2017,washio2003,lee2010,jiang2013,alhasan2018,ribeiro2020,bouhenni2021}.
They can be categorized into three types.
\begin{inparaenum}[(1)]
  \item A type-1 algorithm locates every induced subgraph of a fixed
  size $t$.  The pattern of each induced $t$-node subgraph and its induced
  subgraphs are then recognized by comparison to the chosen graphlet
  templates.  A type-1 algorithm is expensive, the dominant factor in
  the algorithm complexity is the number of the induced subgraphs
  $O(n^{t})$. Common information among overlapping subgraphs is not
  utilized.
  \item A type-2 algorithm starts at every vertex $v$ with the
  frequency at the smallest graphlets and proceeds by upward recursion
  to the frequencies with graphlets with one more node.  Type-2
  algorithms did not have automatically generated equations until a
  recent work by Melckenbeeck et al. in~\cite{melckenbeeck2016}.  The
  question is left wide open how to choose among various ways to
  expand by one node the neighborhood of every vertex $v$.
  \item A type-3 algorithm leverages neighborhood connections, or walks
  in network analysis language, by matrix and vector
  operations~\cite{hocevar2014b,floros2020c}. The interpretation of
  successive matrix-vector products is partially responsible for
  developing the notion of raw or gross frequencies and for
  investigating the relationship between gross and net frequencies.
\end{inparaenum} 

The problem with motif detection and discovery can be described
in terms of the above primary problems.

\subsection{Potential impacts}
\label{sec:potential-impact} 

The graph encoding system with graphlet frequencies can serve as the
ground for a graph calculus. A large problem is presented by small
subgraph attributes locally at vertex neighborhoods (differentiation)
as well as by their spatial and statistical distributions regionally
and globally (integration). A solution to the large graph problem can
then be facilitated by a divide and conquer approach. Such ideas and
approaches are not new. For example, at the center of the graph
matching problem is graph isomorphism.  Subgraph analysis is used to
facilitate and accelerate graph isomorphism tests~\cite{mckay2014}.

We make another important connection: each of the problems described
in \Cref{sec:previous-work} can find its robust, statistical counterparts that are
tolerant of errors, noise or perturbation.  Such robust property is
important to analysis and modeling of real-world networks.  A
statistical subgraph problem is not necessarily associated with a null
model.  Applied graph/network problems that can get direct benefits
from such graph calculus include community or anomaly detection over a
network; graph learning; sampling or pooling on a graph; and graph
indexing, search and retrieval which are based on comparison,
alignment and matching among networks.  The great potential of such
graph calculus cannot be underestimated.

\clearpage
\phantomsection
\label{sec:references}
\addcontentsline{toc}{section}{References}
\printbibliography

\end{document}